\def\BState{\State\hskip-\ALG@thistlm}
\DeclareMathOperator*{\defeq}{\triangleq}
\newtheorem{theorem}{Theorem}
\newtheorem{lemma}{Lemma}
\newtheorem{definition}{Definition}
\newcommand{\bit}{\begin{itemize}}
\newcommand{\eit}{\end{itemize}}
\newcommand{\bc}{\begin{center}}
\newcommand{\ec}{\end{center}}
\newcommand{\ba}{\begin{array}}
\newcommand{\ea}{\end{array}}
\newcommand{\beq}{\begin{equation}}
\newcommand{\eeq}{\end{equation}}
\newcommand{\beqn}{\begin{equation*}}
\newcommand{\eeqn}{\end{equation*}}
\newcommand{\bean}{\begin{eqnarray*}}
\newcommand{\eean}{\end{eqnarray*}}
\newcommand{\bea}{\begin{eqnarray}}
\newcommand{\eea}{\end{eqnarray}}
\def\cv{\boldsymbol{c}}
\def\vv{\boldsymbol{v}}
\def\wv{\boldsymbol{w}}
\newcommand{\Ac}{{\mathcal A}}
\newcommand{\Bc}{{\mathcal B}}
\newcommand{\Fc}{{\mathcal F}}
\newcommand{\Gc}{{\mathcal G}}
\newcommand{\Ic}{{\mathcal I}}
\newcommand{\Rc}{{\mathcal R}}
\newcommand{\Vc}{{\mathcal V}}
\algnewcommand{\IfThenElse}[3]{  \State \algorithmicif\ #1\ \algorithmicthen\ #2\ \algorithmicelse\ #3}
\newcommand\bl[1]{{\color{blue}#1}}
\newcommand{\Me}{\wv}
\newcommand{\rtuple}{)} 
\newcommand{\ltuple}{(}
\newcommand{\inset}{\in}
\newcommand{\OR}{\lor}
\newcommand{\eqlog}{=}
\newcommand{\MVBAInputMsg}{\wv}
\newcommand{\EncodedSymbol}{y}
\newcommand{\VOTE}{\text{``}\mathrm{VOTE}\text{''}}
\newcommand{\CONFIRM}{\text{``}\mathrm{CONFIRM}\text{''}}
\newcommand{\electionround}{r} 
\newcommand{\Election}{\mathrm{Election}}               
\newcommand{\electionoutput}{l}
\newcommand{\ABBA}{\mathrm{ABBA}}
\newcommand{\ACD}{\mathrm{ACID}}
\newcommand{\thisnodeindex}{i}                       
\newcommand{\READY}{\text{``}\mathrm{READY}\text{''}}                           
\newcommand{\send}{\textbf{send}}                   
\newcommand{\ReadyRecord}{\Rc_\mathrm{ready}}                                                                                                           
\newcommand{\IDMVBA}{\mathrm{ID}}    
\newcommand{\IDMVBAtilde}{\mathrm{id}}
\newcommand{\FinishRecord}{\Fc_\mathrm{finish}}                                                                                                           
\newcommand{\ELECTION}{\text{``}\mathrm{ELECTION}\text{''}}                        
\newcommand{\FINISH}{\text{``}\mathrm{FINISH}\text{''}}                        
\newcommand{\ABBAoutput}{a}   
\newcommand{\ABBBA}{\mathrm{ABBBA}}       
\newcommand{\ABAoutput}{b}    
\newcommand{\Output}{\textbf{output}}                      
\newcommand{\terminate}{\textbf{terminate}}
\newcommand{\abbainput}{a}     
\newcommand{\abbainputA}{a_1}    
\newcommand{\abbainputB}{a_2}      
\newcommand{\ABBAVALUE}{\text{``}\mathrm{ABBA}\text{''}}                        
\newcommand{\ABBACountA}{\mathrm{cnt}_1}    
\newcommand{\ABBACountB}{\mathrm{cnt}_2}      
\newcommand{\ABBACountC}{\mathrm{cnt}_3}
\newcommand{\wait}{\textbf{wait}}
\newcommand{\EC}{\mathrm{EC}}  
\newcommand{\ECEnc}{\mathrm{ECEnc}}   
\newcommand{\ECDec}{\mathrm{ECDec}}
\newcommand{\Alphabet}{\Bc}
\newcommand{\defaultvalue}{\bot}
\newcommand{\OciorABA}{\mathrm{OciorABA}}         
\newcommand{\OciorABAstar}{\mathrm{OciorABA}^\star}
\newcommand{\Pass}{\textbf{pass}}
\newcommand{\COOL}{\mathrm{COOL}}
\newcommand{\networksizen}{n}                                           
\newcommand{\networkfaultsizet}{t}
\newcommand{\BA}{\mathrm{BA}}
\newcommand{\OciorRBC}{\text{OciorRBC}}
\newcommand{\ABA}{\mathrm{ABA}}
\newcommand{\IDABA}{\mathrm{ID}}                                                                        
\newcommand{\RBC}{\mathrm{RBC}}                                                                          
\newcommand{\ABAOneSet}{\Ac_{\mathrm{ones}}}                                                                          
\newcommand{\ABAOneSetK}{\Bc_{\mathrm{ones}}}
\newcommand{\deliver}{\textbf{deliver}}
\newcommand{\alphabetsize}{q}
\newcommand{\Node}{P}
\newcommand{\APVA}{\mathrm{APVA}}                  
\newcommand{\missing}{\phi}                       
\newcommand{\Mc}{{\mathcal M}}    
\newcommand{\binaryvalue}{b}                      
\newcommand{\ConfirmRecord}{\cv}                                                                                                           
\newcommand{\ConfirmCount}{\mathrm{cnt}}                       
\newcommand{\APBVAinput}{\vv}          
\newcommand{\APBVAoutput}{\hat{\vv}}                                  
\newcommand{\IDABANew}{\mathrm{ID}^\star}                                                                         
 \newcommand{\RBCReadyindicator}{\Rc_\mathrm{ready}^{\star}}      
\newcommand{\RBCFinishindicator}{\Fc_\mathrm{finish}^{\star}}      
\newcommand{\ABBAoutputNew}{a^\star}    
\newcommand{\ABAoutputNew}{b^\star}    
\newcommand{\ACIDd}{\mathrm{ACID}^{\star}}      
\newcommand{\NonmissingElementSet}{\Mc}                            
\newcommand{\symbolsize}{c}                            
\newcommand{\CommonSet}{\Gc}
\begin{document}
\sloppy
\title{OciorABA: Improved Error-Free Asynchronous Byzantine Agreement via  Partial Vector Agreement}

\author{Jinyuan Chen 
}

\maketitle
\pagestyle{headings}

\begin{abstract}
In this work, we propose an error-free, information-theoretically secure  multi-valued  asynchronous Byzantine agreement ($\ABA$) protocol, called $\OciorABA$. This protocol achieves $\ABA$ consensus on an $\ell$-bit message  with an expected communication complexity of   $O(n\ell + n^3 \log \alphabetsize )$   bits and an expected round complexity of   $O(1)$ rounds, under the optimal resilience condition $n \geq 3t + 1$ in an $n$-node network, where up to $t$ nodes may be dishonest. Here, $\alphabetsize$ denotes the alphabet size of the error correction code used in the protocol.  
In our protocol design, we introduce a new primitive: asynchronous partial vector agreement ($\APVA$). In  $\APVA$,  the distributed nodes input their vectors  and aim to  output a common vector, where  some of the elements of those vectors may be missing or unknown.  
We  propose an $\APVA$ protocol with  an expected communication complexity of $O( n^3 \log \alphabetsize )$ bits and an expected round complexity of  $O(1)$ rounds.  
This  $\APVA$ protocol serves as a key building block for our $\OciorABA$ protocol. 
\end{abstract}


\section{Introduction}

Byzantine agreement ($\BA$), first introduced by Pease, Shostak and Lamport in 1980,  is a distributed consensus problem where $n$ nodes aim to agree  on a message, even if  up to $t$ nodes are dishonest \cite{PSL:80}.   
$\BA$ and its variants serve as fundamental components of distributed systems and cryptography. 
This work focuses on the design of \emph{asynchronous} Byzantine agreement ($\ABA$) protocols, with an emphasis on error-free, information-theoretically secure (IT secure)  $\ABA$. An $\ABA$ protocol is considered \emph{IT secure}  if it meets all required properties  without relying on cryptographic assumptions such as signatures or hashing, except for the common coin assumption. Under the common coin assumption, an $\ABA$ protocol is classified as \emph{error-free} if it guarantees all required properties in \emph{all} executions.

 In the setting of error-free synchronous $\BA$, significant progress has been made  in understating its fundamental limits \cite{LV:11,GP:20,LDK:20,NRSVX:20,Chen:2020arxiv, ChenDISC:21,ChenOciorCOOL:24}. 
 Specifically,   the $\COOL$  protocol proposed by Chen  \cite{Chen:2020arxiv, ChenDISC:21,ChenOciorCOOL:24},  is a deterministic, error-free, IT secure  synchronous $\BA$   protocol that  achieves $\BA$ consensus on an $\ell$-bit message with a communication complexity of   $O(n \ell  + n t \log \alphabetsize)$   bits and  a round complexity of  $O(t)$ rounds,  under the optimal resilience condition $n \geq 3t + 1$.  
 Here, $\alphabetsize$ denotes the alphabet size of the error correction code used in the protocol.  
 When error correction codes with a constant alphabet size (e.g., Expander Code \cite{SS:96}) are used, $\alphabetsize$ becomes a constant, making  $\COOL$   optimal. 
 
  However,   progress in understanding the fundamental limits of error-free $\ABA$ has been more limited.   In this setting,  the protocol proposed by Patra \cite{ Patra:11},  achieves  $\ABA$ consensus with an expected  communication complexity of   $O(n\ell + n^5 \log n )$    bits and  an expected  round complexity of  $O(1)$ rounds, given $n \geq 3t + 1$. 
   Nayak  et al.       \cite{NRSVX:20}  later  improved  the communication complexity  to     $O(n\ell + n^4 \log n )$ bits. 
   Li and Chen \cite{LCabaISIT:21} further reduced the   communication complexity to  $O(\max\{n\ell, nt \log \alphabetsize\})$   bits, but under a suboptimal resilience condition $n \geq 5t + 1$.

 In this work, at first we propose an error-free,  IT secure  $\ABA$  protocol, called $\OciorABAstar$. 
 $\OciorABAstar$ does not rely on any   cryptographic assumptions, such as  signatures or hashing,  except for  the common coin assumption. 
This protocol achieves  $\ABA$ consensus on an $\ell$-bit message with an expected communication complexity of $O(n\ell + n^3 \log \alphabetsize )$  bits  and an expected  round complexity of $O(\log n)$ rounds,  under the optimal resilience  condition $n \geq 3t + 1$.  
Then,   we propose an improved error-free,  IT secure  $\ABA$  protocol, called $\OciorABA$. 
This $\OciorABA$ protocol achieves  $\ABA$ consensus   with an expected communication complexity of $O(n\ell + n^3 \log \alphabetsize )$  bits  and an expected  round complexity of $O(1)$ rounds,  under the optimal resilience  condition $n \geq 3t + 1$.   

In our protocol design for $\OciorABA$, we introduce a new primitive: asynchronous partial vector agreement ($\APVA$). 
In  $\APVA$,  the distributed nodes input their vectors  and aim to  output a common vector, where  some of the elements of those vectors may be missing or unknown.  
We  propose an $\APVA$ protocol with  an expected communication complexity of   $O( n^3 \log \alphabetsize )$  bits and an expected round complexity of     $O(1)$ rounds.  
This  $\APVA$ protocol serves as a key building block for our $\OciorABA$ protocol.

The proposed $\OciorABAstar$,  $\OciorABA$, and $\APVA$  protocols are described in Algorithm~\ref{algm:OciorABAstar}, Algorithm~\ref{algm:OciorABA},   and   Algorithm~\ref{algm:APVA}, respectively.   
 Table~\ref{tb:ABA}   provides a comparison between the proposed $\OciorABAstar$ and  $\OciorABA$ protocols, as well as some other $\ABA$ protocols. 
Some definitions and primitives are provided in the following subsection.

{\renewcommand{\arraystretch}{1.3}
\begin{table}
\footnotesize  
\begin{center}
\caption{Comparison between the proposed protocols and some other multi-valued    $\ABA$  protocols.     Here $\alphabetsize$ denotes the alphabet size of  the    error correction code used in the   protocols. 
When  error correction codes with a constant alphabet size are used, $\alphabetsize$ becomes a constant.  Here $\ell$ denotes the size of the messages to agree on.  
} \label{tb:ABA}
\begin{tabular}{||c||c|c|c|c|c|}
\hline
Protocols & Resilience &   Communication     & Rounds   &       Error-Free     &   Cryptographic     Assumption    \\ 
  &  &  (Expected Total Bits)  &   (Expected)  &               &            (Expect for     Common Coin)       \\ 
\hline
Patra \cite{ Patra:11}  &  $t<\frac{n}{3}$  &   $O(n\ell + n^5 \log n )$      &  $O(1)$    & Yes   &  Non   \\
\hline
Nayak  et al.       \cite{NRSVX:20}  &  $t<\frac{n}{3}$   &   $O(n\ell + n^4 \log n )$      &  $O(1)$     &  Yes  &  Non    \\
\hline
Li-Chen \cite{LCabaISIT:21}  &  $t<\frac{n}{5}$   &    $O(\max\{n\ell, nt \log \alphabetsize\})$        &   $O(1)$     &  Yes   &  Non        \\ 
\hline
    \bl{Proposed $\OciorABAstar$}   &   {\color{blue} $t<\frac{n}{3}$}  &   $\bl{O(n\ell + n^3 \log \alphabetsize )}$   &  {\color{blue} $O(\log n)$}     &   {\color{blue} Yes}     &    {\color{blue} Non}           \\
\hline
  \bl{Proposed $\OciorABA$}   &   {\color{blue} $t<\frac{n}{3}$}  &    $\bl{O(n\ell + n^3 \log \alphabetsize )}$   &   {\color{blue} $O(1)$}     &   {\color{blue} Yes}     &    {\color{blue} Non}            \\
\hline
\end{tabular}
\end{center}
\end{table}
}

\subsection{Primitives}

  We consider an \emph{asynchronous} network, where messages can be delayed arbitrarily by an adversary but are guaranteed to eventually arrive at their destinations.
The adversary is assumed to be adaptive, capable of corrupting any node at any time, with the constraint that at most $t$ nodes can be controlled in total.

For a vector  $\Me$ of size $n$, we use  $\Me[j] \in \Vc \cup \{\missing\}$ to denote the $j$th element of $\Me$, where  $\Vc$ is a non-empty alphabet $\Vc$ and $\missing\notin \Vc$.  The notion   $\Me[j]=\missing$ indicates  that the $j$th element of $\Me$ is missing or unknown.  Here, we focus on binary vectors in the asynchronous partial vector agreement  problem, where $\Vc=\{0,1\}$. 
We use $\NonmissingElementSet(\Me)$ to denote the set of indices of all non-missing  elements of  a vector $\Me$,  i.e., $\NonmissingElementSet(\Me):=\{j:  \Me[j] \neq \missing, j \in[1:n]\}$.   
 We use $\Fc$ to denote the set of indices of all dishonest nodes.

\begin{definition} [{\bf  Asynchronous partial vector agreement ($\APVA$)}]     \label{def:APVA}  
We introduce a new primitive called  $\APVA$.  
In the $\APVA$ problem,  each Node~$i$, $i \in [1:n]$,  inputs  a vector  $\Me_i$ of size $n$, where  some of the elements may be missing or unknown, i.e., $\Me_i[j]=\missing$ for some $j\in [1:n]$. 
The number of missing elements of  $\Me_i$ may decrease over time.  
The distributed nodes aim to  output a common vector  $\Me$, where some of the elements may be missing. 
The $\APVA$ protocol guarantees  the following properties: 
 \begin{itemize}
 \item  {\bf Consistency:} If any honest node outputs a vector $\wv$, then every honest node eventually outputs $\wv$.
\item  {\bf Validity:}    If an honest node outputs $\Me$, then for any non-missing element of $\Me$, i.e., $\Me[j] \neq \missing$ for some $j\in [1:n]$,   at least one honest Node~$i$,   for $i\in [1:n]\setminus \Fc$, must have input $\Me_i[j] =\Me[j]\neq \missing$.    Furthermore,   the number of  non-missing  elements of  the output $\Me$ is greater than or equal to  $n-t$, i.e.,   $|\NonmissingElementSet(\Me)|\geq n-t$,  where $\NonmissingElementSet(\Me):=\{j:  \Me[j] \neq \missing, j \in[1:n]\}$.   
\item  {\bf Termination:} If all honest nodes have input non-missing values in their input vectors for at least $n-t$  positions in common, then eventually every node will output a vector and terminate.
\end{itemize} 
\end{definition}

\begin{definition}  [{\bf Byzantine agreement ($\BA$)}]
The  $\BA$ protocol guarantees  the following properties: 
\begin{itemize}
\item  {\bf Termination:} If all  honest nodes receive their inputs, then every honest node  eventually outputs a value and terminates. 
\item  {\bf Consistency:} If any honest node outputs a value $\wv$, then every honest node eventually outputs $\wv$.
\item  {\bf Validity:}     If all honest nodes input the same  value $\wv$, then every honest node eventually outputs $\wv$.  
\end{itemize} 
\end{definition}

\begin{definition} [{\bf Reliable broadcast ($\RBC$)}]
 In a reliable broadcast protocol, a leader  inputs a value  and broadcasts it to distributed nodes,   satisfying the following conditions:
\begin{itemize}
\item  {\bf Consistency:} If any two honest nodes output $\wv'$ and $\wv''$, respectively, then  $\wv'=\wv''$.
\item   {\bf Validity:} If the leader is  honest and inputs a value $\wv$, then every honest node eventually outputs $\wv$. 
\item  {\bf Totality:}  If one  honest node outputs a value, then every honest node  eventually outputs a value.          
\end{itemize} 
\end{definition}

\noindent  {\bf Asynchronous  complete  information dispersal} ($\ACD$, \cite{ChenOciorMVBA:24}).  The objective of an $\ACD$ protocol is to disperse information across a network. Once  a leader completes the dispersal of its proposed message, every honest node is guaranteed to correctly reconstruct the delivered message from the distributed nodes using a data retrieval scheme.

\begin{definition} [$\ACD$ instance, \cite{ChenOciorMVBA:24}]
In an $\ACD[\ltuple \IDMVBA, i \rtuple]$ protocol, $\Node_i$ disperses a  message   over   distributed nodes, for $i\in [1:n]$. 
An $\ACD[\ltuple \IDMVBA, i \rtuple]$ protocol is complemented by a data retrieval protocol,   ensuring  the following properties:    
\begin{itemize}
\item  {\bf Completeness:}  If $\Node_i$ is honest, then $\Node_i$ eventually completes the dispersal $\ltuple \IDMVBA, i \rtuple$. 
\item  {\bf Availability:} If $\Node_i$ completes the dispersal for $\ltuple \IDMVBA, i \rtuple$,  and all honest nodes start the data retrieval  protocol  for $\ltuple \IDMVBA, i \rtuple$, then each node eventually reconstructs some message.
\item   {\bf Consistency:} If two honest nodes reconstruct messages $\MVBAInputMsg'$ and $\MVBAInputMsg''$ respectively for $\ltuple \IDMVBA, i \rtuple$, then  $\MVBAInputMsg'=\MVBAInputMsg''$. 
\item   {\bf Validity:} If an honest $\Node_i$ has proposed a message $\MVBAInputMsg$ for $\ltuple \IDMVBA, i \rtuple$ and an honest node reconstructs a message $\MVBAInputMsg'$  for $\ltuple \IDMVBA, i \rtuple$, then  $\MVBAInputMsg'=\MVBAInputMsg$. 
\end{itemize} 
\end{definition} 

\begin{definition} [Parallel $\ACD$ instances, \cite{ChenOciorMVBA:24}]
An $\ACD[ \IDMVBA]$ protocol  involves running  $n$ parallel  $\ACD$ instances, $\{\ACD[\ltuple \IDMVBA, i \rtuple]\}_{i=1}^{n}$, satisfying the following properties:  
\begin{itemize}
\item  {\bf Termination:} Every honest node eventually terminates.     
\item  {\bf Integrity:} If one honest node terminates, then there exists a set $\Ic^{\star}$ such that the following conditions hold: 1) $\Ic^{\star}\subseteq [1:n]\setminus \Fc$; 2)  $|\Ic^{\star}| \geq n-2t$; and  3) for any $i\in \Ic^{\star}$,  $\Node_i$ has completed the dispersal $\ACD[\ltuple \IDMVBA, i \rtuple]$.      
\end{itemize}  
\end{definition}

\begin{definition} [{\bf Asynchronous  biased  binary Byzantine agreement ($\ABBBA$)},  \cite{ChenOciorMVBA:24}]    
In an $\ABBBA$ protocol, each  honest node provides a pair of binary inputs $(\abbainputA, \abbainputB)$, where  $\abbainputA, \abbainputB \in \{0,1\}$. The  honest nodes aim to reach a consensus  on a common value $\abbainput \in \{0,1\}$, satisfying the following properties:  
\begin{itemize}
\item   {\bf Conditional termination:} Under an input condition---i.e.,  if one honest node inputs its second number as $\abbainputB =1$ then at least $t+1$ honest nodes  input  their first numbers as $\abbainputA =1$---every honest node eventually outputs a value and terminates.     
\item   {\bf Biased validity:} If at least $t+1$ honest nodes input the second number as $\abbainputB=1$, then any honest node that terminates outputs $1$.       
\item   {\bf Biased integrity:} If any honest node outputs $1$, then at least one honest node inputs $\abbainputA=1$ or $\abbainputB=1$.             
\end{itemize} 
\end{definition}

 \begin{definition} [{\bf Common coin}]
We assume the existence of a common coin protocol $\electionoutput \gets \Election[\IDMVBAtilde]$ associated with an identity $\IDMVBAtilde$, satisfying the following properties:  
 \begin{itemize}
\item   {\bf Termination:} If $t+1$ honest nodes activate $\Election[\IDMVBAtilde]$, then each honest node that activates it will output a common value $\electionoutput$.     
\item  {\bf Consistency:} If any two honest nodes output $\electionoutput'$ and $\electionoutput''$ from $\Election[\IDMVBAtilde]$, respectively, then  $\electionoutput'=\electionoutput''$.
\item   {\bf Uniform:} The output $\electionoutput$ from $\Election[\IDMVBAtilde]$ is randomly generated based on a uniform distribution for $\electionoutput \in [1:n]$.
\item   {\bf Unpredictability:} The adversary cannot correctly predict the output of $\Election[\IDMVBAtilde]$ unless at least one honest node has activated it.
\end{itemize} 
\end{definition}

\noindent   {\bf Erasure code  ($\EC$).}    An $(n, k)$ erasure coding  scheme  consists of an encoding scheme $\ECEnc: \Alphabet^{k} \to  \Alphabet^{n}$ and a decoding scheme $\ECDec: \Alphabet^{k} \to  \Alphabet^{k}$, where $\Alphabet$ denotes the alphabet of each symbol and $\alphabetsize\defeq|\Alphabet|$  denotes the size of $\Alphabet$. With an $(n, k)$ erasure code, the original message can be decoded from any $k$ encoded symbols. Specifically, given  $[\EncodedSymbol_1,  \EncodedSymbol_2, \cdots, \EncodedSymbol_{n}] \gets \ECEnc(n, k, \MVBAInputMsg)$, then $\ECDec(n,k, \{\EncodedSymbol_{j_1}, \EncodedSymbol_{j_2}, \cdots  \EncodedSymbol_{j_k}\}) =\MVBAInputMsg$ holds true for any $k$ distinct integers $j_1, j_2, \cdots, j_k \in [1:n]$.

\vspace{5pt}

\begin{algorithm}
\caption{$\OciorABAstar$  protocol,  with an identifier $\IDABA$,  for $n\geq 3t+1$. Code is shown for $\Node_{\thisnodeindex}$.}  \label{algm:OciorABAstar}
\begin{algorithmic}[1]
\vspace{5pt}    
\footnotesize

 \Statex   \emph{//   **   $\RBC[\ltuple \IDABA, j \rtuple]$ denotes a    reliable broadcast ($\RBC$) instance, where Node~$j$ is the leader,  which calls  $\OciorRBC$ protocol \cite{ChenOciorCOOL:24} **}    
 \Statex   \emph{//   **   $\ABBA[\ltuple \IDABA, j \rtuple]$ denotes the $j$th instance of asynchronous  binary Byzantine agreement  ($\ABBA$), for $j\in [1:n]$   **}    
 
 \Statex
  \State  Initially set     $\APBVAinput_i[j]\gets \missing $, $\forall j\in [1:n]$ 
\State {\bf upon} receiving input  $\Me_{i}$  {\bf do}:  
\Indent  
	
	\State  $[y_1^{(\thisnodeindex)}, y_2^{(\thisnodeindex)}, \cdots, y_{n}^{(\thisnodeindex)}] \gets \ECEnc(\networksizen,   t    +1 , \Me_{\thisnodeindex})$     \label{line:abaECncoding}    \quad \quad \quad \quad\quad\quad\quad\quad\quad\quad\quad\quad  \emph{// erasure code encoding }
	
	\State  $\Pass$  $y_{\thisnodeindex}^{(\thisnodeindex)}$ into $\RBC[\ltuple \IDABA, \thisnodeindex \rtuple]$ as an input, where Node~$\thisnodeindex$ is the leader

\EndIndent

\State {\bf upon} delivery of $y_j^{(j)}$  from   $\RBC[\ltuple \IDABA, j \rtuple]$  {\bf do}:    \label{line:abainputabbacondition}     
\Indent  

	\State $\wait$ until $y_j^{(\thisnodeindex)}$ has been computed  \quad \quad \quad \quad \quad \quad \quad \quad \quad \quad \quad \quad   \quad \quad \quad \quad \emph{//     wait for executing Line~\ref{line:abaECncoding}}
    	\IfThenElse {$y_j^{(j)} = y_j^{(\thisnodeindex)}$}  {set $\APBVAinput_i[j] \gets 1$} {set $\APBVAinput_i[j] \gets 0$}    \label{line:abainputabbaconditionequalstar}    
	\State  $\Pass$  $\APBVAinput_i[j]$ into $\ABBA[\ltuple \IDABA, j \rtuple]$ as an input       \label{line:abainputabba}      \quad  \quad\quad   \quad \quad \quad  \quad\quad \quad \quad \quad  \quad\quad \emph{//       asynchronous  binary Byzantine agreement    }

\EndIndent

\State {\bf upon} delivery of outputs from  $n-t$   instances of $\ABBA$    {\bf do}:       \label{line:abaoutputabba} 
\Indent  

 	\State  $\Pass$ input $0$  to each instance of  $\ABBA$  that has not yet received input from this node

\EndIndent

\State {\bf upon} delivery of outputs from  all $n$   instances of $\ABBA$    {\bf do}:      \label{line:abanABBA} 
\Indent

	\State  let $\ABAOneSet  \subseteq [1:n]$ denote the indices of all $\ABBA$ instances that delivered $1$
	\If {   $|\ABAOneSet| < t+1$}        \label{line:abaLESSt1} 
		\State   $\Output$  $\defaultvalue$ and $\terminate$ 	  
	\Else
		\State  let $\ABAOneSetK  \subseteq \ABAOneSet$ denote the first $t+1$ smallest  values in   $\ABAOneSet$      \label{line:abaGEQt1a} 
 		\State $\wait$ for the  delivery of $y_j^{(j)}$  from   $\RBC[\ltuple \IDABA, j \rtuple]$, $\forall j\in \ABAOneSetK$  
		\State   $\hat{\Me} \gets \ECDec(n, t+1, \{y_j^{(j)}\}_{j\in \ABAOneSetK})$      \label{line:abaECdec}   \quad \quad \quad \quad\quad\quad\quad\quad\quad\quad\quad\quad\quad\quad\quad\quad \quad\quad\quad\quad \emph{// erasure code decoding }
		\State $\Output$  $\hat{\Me}$ and $\terminate$ 	    \label{line:abaGEQt1b} 
	\EndIf

\EndIndent

\end{algorithmic}
\end{algorithm}

\begin{figure} [H]
\centering
\includegraphics[width=18cm]{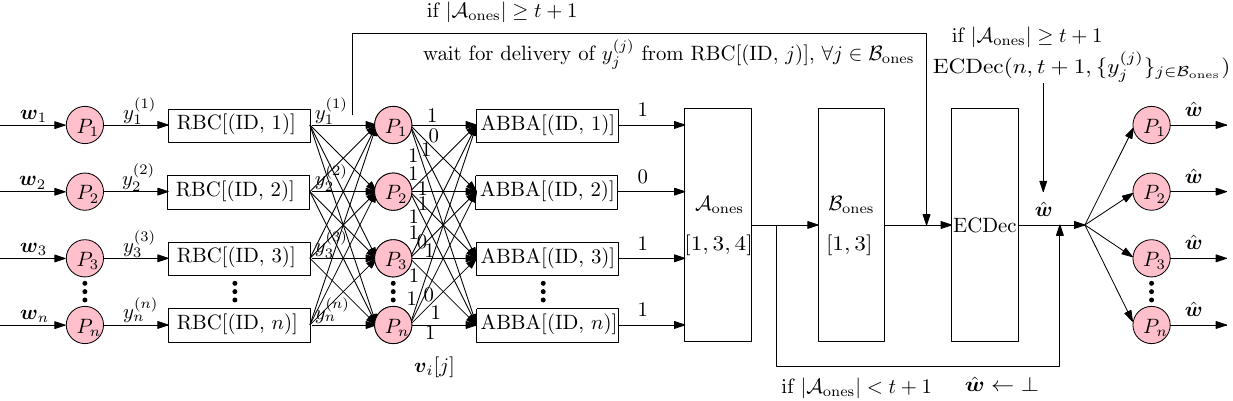}
\caption{A block diagram of the proposed $\OciorABAstar$ protocol with an identifier $\IDMVBA$. Here $\ABAOneSet  \subseteq [1:n]$ denotes the indices of all $\ABBA$ instances that delivered $1$, while $\ABAOneSetK  \subseteq \ABAOneSet$ denotes the first $t+1$ smallest  values in   $\ABAOneSet$. The description focuses on the example with $n=4$ and $t=1$. 
}
\label{fig:OciorABAstar}
\end{figure}

\section{$\OciorABAstar$}    \label{sec:OciorABAstar}

This proposed $\OciorABAstar$ is an error-free,  information-theoretically secure asynchronous $\BA$  protocol.   $\OciorABAstar$ does not rely on any   cryptographic assumptions, such as  signatures or hashing,  except for  the common coin assumption. 
This protocol achieves the asynchronous $\BA$ consensus on an $\ell$-bit message with an expected communication complexity of $O(n\ell + n^3 \log \alphabetsize )$  bits  and an expected  round complexity of $O(\log n)$ rounds,  under the optimal resilience  condition $n \geq 3t + 1$.  Here $\alphabetsize$ denotes the alphabet size of  the    error correction code used in the   protocol, inherited from the reliable broadcast protocols invoked within the protocol. 
When  error correction codes with a constant alphabet size are used, $\alphabetsize$ becomes a constant.

\subsection{Overview of $\OciorABAstar$}     
The proposed  $\OciorABAstar$ is described in Algorithm~\ref{algm:OciorABAstar}. Fig.~\ref{fig:OciorABAstar} presents a block diagram of the proposed $\OciorABAstar$ protocol.  The main steps of $\OciorABAstar$ are outlined below.  
\begin{itemize}
\item    Erasure code  encoding: Each honest Node~$\thisnodeindex$ first encodes its initial message into $n$ symbols, $y_1^{(\thisnodeindex)}, y_2^{(\thisnodeindex)}, \cdots, y_{n}^{(\thisnodeindex)}$,  using an erasure code. 
\item    $\RBC$:  Then Node~$\thisnodeindex$ passes   $y_{\thisnodeindex}^{(\thisnodeindex)}$ into $\RBC[\ltuple \IDABA, \thisnodeindex \rtuple]$ as an input. 
    $\RBC[\ltuple \IDABA, \thisnodeindex \rtuple]$ represents  a   reliable broadcast  instance where Node~$\thisnodeindex$ is the leader.  $\RBC[\ltuple \IDABA, \thisnodeindex \rtuple]$ calls  the $\OciorRBC$ protocol \cite{ChenOciorCOOL:24}. 
\item    Set input values for $\ABBA$:   Upon  delivery of $y_j^{(j)}$  from   $\RBC[\ltuple \IDABA, j \rtuple]$, Node~$\thisnodeindex$ sets   $\APBVAinput_i[j] \gets 1$ if $y_j^{(j)} = y_j^{(\thisnodeindex)}$; otherwise,  Node~$\thisnodeindex$ sets  $\APBVAinput_i[j] \gets 0$.  Then, Node~$\thisnodeindex$ passes   $\APBVAinput_i[j]$ into $\ABBA[\ltuple \IDABA, j \rtuple]$ as an input, where   $\ABBA[\ltuple \IDABA, j \rtuple]$ denotes the $j$th instance of asynchronous  binary Byzantine agreement  ($\ABBA$), for $j\in [1:n]$. 
\item    Set remaining input values for $\ABBA$:   Upon   delivery of outputs from  $n-t$   instances of $\ABBA$,  Node~$\thisnodeindex$ passes   input $0$  to each instance of  $\ABBA$  that has not yet received input from this node.  
\item     Erasure code  decoding:   Upon  delivery of outputs from  all $n$   instances of $\ABBA$, let $\ABAOneSet  \subseteq [1:n]$ denote the indices of all $\ABBA$ instances that delivered $1$.
\begin{itemize}
\item  If $|\ABAOneSet| < t+1$, then each honest node outputs   $\defaultvalue$ and terminates. 
\item  If $|\ABAOneSet| \geq  t+1$,    let $\ABAOneSetK  \subseteq \ABAOneSet$ denote the first $t+1$ smallest  values in   $\ABAOneSet$.
In this case, each honest node waits for the  delivery of $y_j^{(j)}$  from   $\RBC[\ltuple \IDABA, j \rtuple]$, $\forall j\in \ABAOneSetK$, and then decodes  
	   $\hat{\Me} \gets \ECDec(n, t+1, \{y_j^{(j)}\}_{j\in \ABAOneSetK})$      using erasure code decoding. Each honest node     then 
outputs  $\hat{\Me}$ and terminates. 
\end{itemize} 
\end{itemize}

\subsection{Analysis of $\OciorABAstar$}     

The main results of  $\OciorABAstar$ are summarized in the following theorems.

 \begin{theorem}  [Termination]  \label{thm:OciorABAterminate}
Given $n\geq 3t+1$,  each  honest node  eventually  outputs a  message and terminate   in $\OciorABAstar$.  
\end{theorem}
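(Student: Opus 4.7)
The plan is to prove termination in two stages: first show that every ABBA instance eventually delivers at every honest node, and then show that the subsequent output branch cannot block. I would rely on the properties of $\RBC$ (validity, totality) and of $\ABBA$ (termination, plus the fact that an output of $1$ implies some honest node input $1$).

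First I would argue the ``inputs get supplied'' step. For every $j$ with an honest leader, $\RBC[\ltuple \IDABA, j \rtuple]$ eventually delivers $y_j^{(j)}$ at every honest node by RBC validity and totality. By the code at Line~\ref{line:abainputabbaconditionequalstar}--\ref{line:abainputabba}, each honest node then supplies an input to $\ABBA[\ltuple \IDABA, j \rtuple]$. There are at least $n-t$ honest leaders, so at least $n-t$ ABBA instances receive inputs from all honest nodes and, by ABBA termination, deliver outputs at every honest node. This triggers Line~\ref{line:abaoutputabba}, after which every honest node passes $0$ as input to any remaining ABBA instance. Hence all $n$ ABBA instances eventually have inputs from every honest node and, by ABBA termination, deliver at every honest node. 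This establishes the precondition of Line~\ref{line:abanABBA}.

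Next I would handle the output step. If $|\ABAOneSet| < t+1$, the node outputs $\defaultvalue$ and terminates immediately. Otherwise $|\ABAOneSetK| = t+1$, and the node must wait for $y_j^{(j)}$ from $\RBC[\ltuple \IDABA, j \rtuple]$ for every $j \in \ABAOneSetK$. For each such $j$, $\ABBA[\ltuple \IDABA, j \rtuple]$ delivered $1$; by ABBA validity (contrapositively, since not all honest nodes can have input $0$), at least one honest node had input $\APBVAinput_i[j] = 1$. By the rule at Line~\ref{line:abainputabbaconditionequalstar}, that node must have already delivered $y_j^{(j)}$ from $\RBC[\ltuple \IDABA, j \rtuple]$, and RBC totality then forces $y_j^{(j)}$ to eventually be delivered at every honest node. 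Thus the wait completes, the erasure-code decoding at Line~\ref{line:abaECdec} runs, and the node outputs and terminates.

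The main obstacle is the circular dependency inherent to this kind of construction: ABBA instances with dishonest leaders may never receive inputs from honest nodes via the RBC path, which without intervention would prevent both ABBA termination and the final wait from completing. The two-phase argument above is what breaks the circularity, using the $n-t$ honest-leader ABBA instances as the ``seed'' that unlocks the default-$0$ inputs at Line~\ref{line:abaoutputabba}. A subtle point worth verifying is the implication ``ABBA output $1$ $\Rightarrow$ some honest input was $1$,'' which follows from standard binary-ABBA validity but should be flagged explicitly since termination of the final wait depends on it.
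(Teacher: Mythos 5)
Your proposal is correct and follows essentially the same route as the paper: your first stage reproduces the paper's Lemma~\ref{lm:OciorABApropertyABBAn} (honest leaders' $\RBC$ instances deliver, seeding inputs to at least $n-t$ $\ABBA$ instances, which unlocks the default-$0$ inputs at Line~\ref{line:abaoutputabba} and hence termination of all $n$ instances), and your second stage is exactly the paper's Lemma~\ref{lm:OciorABApropertyABBARBC} (an $\ABBA$ output of $1$ implies some honest node input $1$, hence had already received $y_j^{(j)}$, so $\RBC$ totality completes the final wait). The only cosmetic difference is that the paper phrases the first stage as a proof by contradiction while you argue it directly; the underlying chain of implications is identical.
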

\begin{proof}
From Lemma~\ref{lm:OciorABApropertyABBAn}, it is concluded that all $n$ instances of  $\ABBA$  eventually deliver outputs at each honest node.   
Based on this conclusion,  the condition in Line~\ref{line:abanABBA} of Algorithm~\ref{algm:OciorABAstar} is eventually triggered at each honest node.  At this point, if the condition in  Line~\ref{line:abaLESSt1} of Algorithm~\ref{algm:OciorABAstar} is satisfied, i.e.,    $|\ABAOneSet| < t+1$,  then each honest node eventually outputs  $\defaultvalue$ and terminates,   where $\ABAOneSet  \subseteq [1:n]$ denote the indices of all $\ABBA$ instances that delivered $1$.  If  $|\ABAOneSet| \geq t+1$,  each honest node eventually runs the steps in Lines~\ref{line:abaGEQt1a}-\ref{line:abaGEQt1b} and terminates.  It is worth noting that, based on Lemma~\ref{lm:OciorABApropertyABBARBC},   $t+1$ instances of   $\{\RBC[\ltuple \IDABA, j \rtuple]\}_{j\in \ABAOneSetK}$ eventually deliver outputs at  each honest node,  
where $\ABAOneSetK  \subseteq \ABAOneSet$ denotes the first $t+1$ smallest  values in   $\ABAOneSet$. 
 \end{proof}

 \begin{theorem}  [Validity]  \label{thm:OciorABAvalidity}
Given $n\geq 3t+1$, if  all  honest nodes input the same  value $\wv$, then each honest node eventually outputs $\wv$ in $\OciorABAstar$.   
\end{theorem}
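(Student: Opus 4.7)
The plan is to exploit the symmetry created by the hypothesis: since every honest node applies the same deterministic $\ECEnc(n, t{+}1, \wv)$ to the same $\wv$, every honest node computes the identical codeword, i.e., $y_j^{(i)} = y_j$ for all honest $i$ and all $j \in [1{:}n]$, where $y_j$ denotes the canonical $j$th symbol of the encoding of $\wv$. I then trace the protocol and show that (a) the condition on Line~\ref{line:abaLESSt1} fails (so the $\defaultvalue$ branch is not taken), and (b) the $t{+}1$ symbols retrieved in Line~\ref{line:abaECdec} are precisely the canonical symbols $\{y_j\}_{j\in \ABAOneSetK}$, from which $\ECDec(n, t{+}1, \cdot)$ reconstructs $\wv$ unambiguously.

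First I would establish that $\ABAOneSet$ is large. For every honest leader $j$, the validity of $\RBC[\ltuple \IDABA, j \rtuple]$ ensures that every honest node eventually delivers $y_j^{(j)} = y_j$, which matches the locally computed $y_j^{(i)} = y_j$; hence every honest $i$ sets $\APBVAinput_i[j] \gets 1$ and passes $1$ into $\ABBA[\ltuple \IDABA, j \rtuple]$. By the validity of $\ABBA$, that instance outputs $1$, so the index $j$ of every honest node lies in $\ABAOneSet$. Since $n \geq 3t{+}1$ there are at least $n - t \geq 2t{+}1$ honest indices, so $|\ABAOneSet| \geq t{+}1$ and the algorithm proceeds to the $\ECDec$ branch on Lines~\ref{line:abaGEQt1a}--\ref{line:abaGEQt1b}.

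The main obstacle is the second claim, because $\ABAOneSetK$ need not be a subset of the honest indices; it may contain dishonest leaders whose RBC broadcasts were not a priori constrained to equal $y_j$. The key observation here is the integrity property of $\ABBA$: if $\ABBA[\ltuple \IDABA, j \rtuple]$ outputs $1$, then at least one honest node $i$ must have input $\APBVAinput_i[j] = 1$ on Line~\ref{line:abainputabbaconditionequalstar}, which by the branching condition means this honest $i$ delivered $y_j^{(j)}$ from $\RBC[\ltuple \IDABA, j \rtuple]$ satisfying $y_j^{(j)} = y_j^{(i)} = y_j$. Then $\RBC$ consistency and totality force every other honest node to eventually deliver the same value $y_j$ from that instance, so the $\wait$ step terminates with $y_j^{(j)} = y_j$ for every $j\in \ABAOneSetK$.

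Combining the two parts, every honest node eventually reaches Line~\ref{line:abaECdec} with the input $\{y_j\}_{j\in \ABAOneSetK}$, a set of $t{+}1$ genuine codeword symbols of $\wv$. By the defining property of the $(n, t{+}1)$ erasure code, $\ECDec(n, t{+}1, \{y_j\}_{j\in \ABAOneSetK}) = \wv$, so $\hat{\Me} = \wv$ is output, and termination follows from Theorem~\ref{thm:OciorABAterminate}. The subtle point throughout is that the argument relies on determinism of $\ECEnc$ and on the standard integrity property of $\ABBA$ (if the output is $1$ then some honest node input $1$), which permits dishonest RBC leaders in $\ABAOneSetK$ to be handled uniformly with honest ones.
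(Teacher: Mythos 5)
Your overall structure --- first show $|\ABAOneSet|\ge t+1$ so that the $\defaultvalue$ branch at Line~\ref{line:abaLESSt1} is not taken, then show that every $j\in\ABAOneSetK$ (honest leader or not) has $\RBC[\ltuple \IDABA, j \rtuple]$ deliver the canonical symbol $y_j=\ECEnc_j(n,t+1,\wv)$ at every honest node, then decode --- is the same as the paper's, and your treatment of possibly dishonest leaders in $\ABAOneSetK$ via the integrity of $\ABBA$ (an output of $1$ forces some honest node to have input $1$ through the comparison at Line~\ref{line:abainputabbaconditionequalstar}, hence to have delivered the canonical symbol) combined with $\RBC$ consistency and totality is exactly the content of the paper's Lemmas~\ref{lm:OciorABApropertyABBARBC} and~\ref{lm:OciorABApropertyABBAy}. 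However, there is a genuine gap in your first step. You claim that for every honest leader $j$, \emph{every} honest node $i$ passes $1$ into $\ABBA[\ltuple \IDABA, j \rtuple]$, and hence by $\ABBA$ validity every honest index lands in $\ABAOneSet$. This ignores the rule at Line~\ref{line:abaoutputabba}: once a node has received outputs from $n-t$ instances of $\ABBA$, it inputs $0$ to every instance it has not yet fed. An asynchronous adversary can delay the delivery of $\RBC[\ltuple \IDABA, j \rtuple]$ at some honest node $i$ until after $n-t$ other $\ABBA$ instances have delivered outputs at $i$; node $i$ then inputs $0$ to $\ABBA[\ltuple \IDABA, j \rtuple]$ even though $j$ is honest, the hypothesis of $\ABBA$ validity fails for that instance, and it may legitimately output $0$. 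So ``the index of every honest node lies in $\ABAOneSet$'' is false in general, and your route to $|\ABAOneSet|\ge t+1$ does not go through as written.

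The conclusion $|\ABAOneSet|\ge t+1$ does survive, but it requires the timing argument of the paper's Lemma~\ref{lm:OciorABAvalidity}: consider the first moment at which any honest node triggers Line~\ref{line:abaoutputabba}. Before that moment, the only inputs honest nodes have made to $\ABBA$ instances come from the comparison at Line~\ref{line:abainputabbaconditionequalstar}, which for an honest leader always yields $1$ under the hypothesis that all honest nodes hold the same $\wv$; yet at that moment $n-t$ instances have already delivered outputs, of which at least $n-2t\ge t+1$ correspond to honest leaders, and by $\ABBA$ validity those at least $t+1$ instances delivered $1$. With this repair, the remainder of your argument (handling of $\ABAOneSetK$, correctness of $\ECDec$ on $t+1$ genuine codeword symbols, and termination via Theorem~\ref{thm:OciorABAterminate}) goes through unchanged.
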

\begin{proof}
From Lemma~\ref{lm:OciorABAvalidity}, if  all  honest nodes input the same  value $\wv$, then   at least $t+1$ instances of $\ABBA$  eventually output $1$.  
Based on this results and Lemma~\ref{lm:OciorABApropertyABBAn}, if  all  honest nodes input the same  value $\wv$, every honest node eventually runs the steps in Lines~\ref{line:abaGEQt1a}-\ref{line:abaGEQt1b} of Algorithm~\ref{algm:OciorABAstar}.   
From Lemma~\ref{lm:OciorABApropertyABBAy}, if  all  honest nodes input the same  value $\wv$ and  $\ABBA[\ltuple \IDABA, j \rtuple]$  delivers an output $1$,  then  $\RBC[\ltuple \IDABA, j \rtuple]$   eventually delivers the same output  $\ECEnc_j(\networksizen,   t +1 , \wv)$ at each honest node, for $j\in \ABAOneSetK$, where $\ABAOneSetK$   denotes the first $t+1$ smallest  indices of all $\ABBA$ instances that delivered $1$. In this case, every  honest node eventually output the same the decoded message $\wv$ (see  Line~\ref{line:abaECdec} of Algorithm~\ref{algm:OciorABAstar}), where  $\wv=\ECDec(n, t+1, \{\ECEnc_j(\networksizen,   t +1 , \wv)\}_{j\in \ABAOneSetK})$. 
\end{proof}

\begin{theorem}  [Consistency]   \label{thm:OciorABAconsistency} 
Given $n\geq 3t+1$,  if any honest node outputs a value $\wv$, then every honest node eventually outputs $\wv$   in $\OciorABAstar$. 
\end{theorem}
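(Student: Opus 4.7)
The plan is to leverage the consistency properties of the sub-protocols $\ABBA$ and $\RBC$ to conclude that every honest node reaches the same output. First I would invoke the consistency of each $\ABBA[\ltuple \IDABA, j \rtuple]$ instance: since every honest node eventually receives outputs from all $n$ $\ABBA$ instances (as used in the proof of Theorem~\ref{thm:OciorABAterminate}), and consistency of $\ABBA$ forces every honest node to see the same output bit for each instance, every honest node computes the identical set $\ABAOneSet \subseteq [1:n]$. Consequently the branch decision at Line~\ref{line:abaLESSt1} is identical at all honest nodes.

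Next I would split into two cases. If $|\ABAOneSet| < t+1$, then every honest node outputs $\defaultvalue$ and terminates, so consistency is immediate. If $|\ABAOneSet| \geq t+1$, then since $\ABAOneSet$ is identical across honest nodes, the deterministically defined subset $\ABAOneSetK$ (the first $t+1$ smallest indices in $\ABAOneSet$) is also identical across honest nodes. It then suffices to show that every honest node eventually delivers the same $y_j^{(j)}$ from $\RBC[\ltuple \IDABA, j \rtuple]$ for each $j\in \ABAOneSetK$; once this is established, applying the deterministic $\ECDec$ at Line~\ref{line:abaECdec} to the common input $\{y_j^{(j)}\}_{j\in \ABAOneSetK}$ yields the same $\hat{\Me}$ at every honest node.

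To verify the $\RBC$ claim, I would argue as follows. For each $j \in \ABAOneSetK$ we have $\ABBA[\ltuple \IDABA, j \rtuple]$ delivering $1$. By the biased integrity/validity properties one typically has for $\ABBA$, at least one honest node must have passed input $1$ into $\ABBA[\ltuple \IDABA, j \rtuple]$ at Line~\ref{line:abainputabba}, which, by Line~\ref{line:abainputabbacondition}, required that honest node to have already delivered $y_j^{(j)}$ from $\RBC[\ltuple \IDABA, j \rtuple]$. The totality property of $\RBC$ then guarantees that every honest node eventually delivers some value from $\RBC[\ltuple \IDABA, j \rtuple]$, and the consistency property of $\RBC$ guarantees that this value equals the same $y_j^{(j)}$ at every honest node. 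Thus in the second case every honest node eventually obtains the identical collection $\{y_j^{(j)}\}_{j\in \ABAOneSetK}$ and outputs the identical decoded message $\hat{\Me}$.

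The main obstacle I expect is the totality step: arguing that whenever $\ABBA[\ltuple \IDABA, j \rtuple]$ outputs $1$, the corresponding $\RBC[\ltuple \IDABA, j \rtuple]$ must eventually complete at every honest node. This is a cross-protocol invariant that hinges on the fact that an honest node only votes $1$ in $\ABBA[\ltuple \IDABA, j \rtuple]$ after it has itself received a matching $y_j^{(j)}$ from $\RBC[\ltuple \IDABA, j \rtuple]$, so that $\RBC$ totality can be triggered. The remainder of the argument is a clean composition of the sub-protocols' consistency guarantees and requires no substantive new reasoning.
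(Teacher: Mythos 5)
Your proof is correct and follows essentially the same route as the paper: it uses termination and consistency of the $n$ $\ABBA$ instances to get an identical $\ABAOneSet$ (hence identical branch and identical $\ABAOneSetK$) at every honest node, and then reproduces the paper's Lemma~\ref{lm:OciorABApropertyABBARBC} inline---an output of $1$ from $\ABBA[\ltuple \IDABA, j \rtuple]$ forces some honest node to have input $1$, hence to have delivered $y_j^{(j)}$ from $\RBC[\ltuple \IDABA, j \rtuple]$, after which $\RBC$ Totality and Consistency give every honest node the same symbol and the deterministic $\ECDec$ gives the same output. One small terminological slip: the step ``output $1$ implies some honest node input $1$'' follows from the (contrapositive of the) ordinary Validity property of binary $\BA$, not from ``biased integrity/validity,'' which in this paper is a property of the separate $\ABBBA$ primitive.
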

\begin{proof}
From Lemma~\ref{lm:OciorABApropertyABBAn},  all $n$ instances of  $\ABBA$  eventually deliver outputs at each honest node.   
Thus,  the condition in Line~\ref{line:abanABBA} of  Algorithm~\ref{algm:OciorABAstar} is eventually triggered at each honest node.  If the condition in  Line~\ref{line:abaLESSt1} is satisfied,  then every honest node eventually outputs  $\defaultvalue$.  Otherwise,  every honest node eventually runs the steps in Lines~\ref{line:abaGEQt1a}-\ref{line:abaGEQt1b} and  outputs the same the decoded message, i.e.,  $\ECDec(n, t+1, \{y_j^{(j)}\}_{j\in \ABAOneSetK})$  for some $\{y_j^{(j)}\}_{j\in \ABAOneSetK}$ (see  Line~\ref{line:abaECdec}).  
It is worth noting   from Lemma~\ref{lm:OciorABApropertyABBARBC} that,   $\RBC[\ltuple \IDABA, j \rtuple]$   eventually delivers the same output  $y_j^{(j)}$ at each honest node,   $\forall j\in \ABAOneSetK$.  
 \end{proof}

 \begin{lemma}  [Property~$1$ of $\ABBA$]  \label{lm:OciorABApropertyABBARBC}
In $\OciorABAstar$, if $\ABBA[\ltuple \IDABA, j \rtuple]$  delivers an output $1$ at one honest node,  then $\RBC[\ltuple \IDABA, j \rtuple]$   eventually delivers the same output  $y_j^{(j)}$ at each honest node, for some $y_j^{(j)}$, for $j\in \ABAOneSetK$, where $\ABAOneSetK$   denotes the first $t+1$ smallest  indices of all $\ABBA$ instances that delivered $1$. 
\end{lemma}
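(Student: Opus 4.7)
The plan is to derive the conclusion by combining the integrity property of the binary Byzantine agreement primitive with the totality and consistency properties of reliable broadcast.

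First, I would invoke a standard property of asynchronous binary Byzantine agreement: if $\ABBA[\ltuple \IDABA, j \rtuple]$ outputs $1$ at one honest node, then at least one honest node must have input $1$ to this $\ABBA$ instance. (This is the usual validity/integrity guarantee of binary BA — the output is constrained to lie in the set of honest inputs.) Let $\Node_{i^\star}$ denote such an honest node that input $1$ to $\ABBA[\ltuple \IDABA, j \rtuple]$.

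Next, I would trace back how $\Node_{i^\star}$ could have set $\APBVAinput_{i^\star}[j]=1$. Inspecting Lines~\ref{line:abainputabbacondition}--\ref{line:abainputabba} and Line~\ref{line:abaoutputabba} of Algorithm~\ref{algm:OciorABAstar}, there are only two places where a value is passed into $\ABBA[\ltuple \IDABA, j \rtuple]$: either through the ``upon delivery of $y_j^{(j)}$'' branch, where the input to $\ABBA$ is set based on the comparison with $y_j^{(i^\star)}$, or through the ``pass $0$'' branch after $n-t$ $\ABBA$ outputs have been delivered. The latter branch always inputs $0$, so $\Node_{i^\star}$ must have taken the former branch; hence $\Node_{i^\star}$ delivered some value $y_j^{(j)}$ from $\RBC[\ltuple \IDABA, j \rtuple]$ and, moreover, the equality $y_j^{(j)}=y_j^{(i^\star)}$ held (otherwise $\APBVAinput_{i^\star}[j]$ would have been set to $0$).

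Finally, I would apply the properties of the $\RBC$ primitive. Since $\Node_{i^\star}$ has delivered a value from $\RBC[\ltuple \IDABA, j \rtuple]$, the \emph{totality} property guarantees that every honest node eventually delivers some value from $\RBC[\ltuple \IDABA, j \rtuple]$; the \emph{consistency} property guarantees that all such delivered values are identical, namely the same $y_j^{(j)}$. This establishes the claim.

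I do not foresee any real obstacle: the argument is essentially a two-line composition of ABBA integrity and RBC totality/consistency. The only subtle point is ruling out the Line~\ref{line:abaoutputabba} branch as a possible source of the input $1$, but this is immediate from inspection of the pseudocode.
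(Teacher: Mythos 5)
Your proposal is correct and follows essentially the same route as the paper's own proof: ABBA delivering $1$ implies some honest node input $1$, which (since the Line~\ref{line:abaoutputabba} branch only ever inputs $0$) means that node delivered a value from $\RBC[\ltuple \IDABA, j \rtuple]$, and the Totality and Consistency properties of $\RBC$ then give the conclusion. The only difference is that you make explicit the step of ruling out the ``pass $0$'' branch, which the paper leaves implicit.
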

\begin{proof}
In $\OciorABAstar$, if $\ABBA[\ltuple \IDABA, j \rtuple]$  delivers an output $1$ at one honest node, for $j\in \ABAOneSetK$,  then at least one honest node has provided an input $1$ to $\ABBA[\ltuple \IDABA, j \rtuple]$  (otherwise, $\ABBA[\ltuple \IDABA, j \rtuple]$ will deliver an output $0$). In this case, the honest node providing an input $1$ to $\ABBA[\ltuple \IDABA, j \rtuple]$ should have received the output $y_j^{(j)}$  from   $\RBC[\ltuple \IDABA, j \rtuple]$  for some $y_j^{(j)}$ (see Lines~\ref{line:abainputabbacondition}-\ref{line:abainputabba} of Algorithm~\ref{algm:OciorABAstar}).  From the Totality and Consistency properties of the  $\RBC$, if one  honest node outputs a value $y_j^{(j)}$, then every honest node  eventually outputs the same value $y_j^{(j)}$.   
 \end{proof}

 \begin{lemma}  [Property~$2$ of $\ABBA$]  \label{lm:OciorABApropertyABBAy}
In $\OciorABAstar$, if $\ABBA[\ltuple \IDABA, j \rtuple]$  delivers an output $1$ at one honest node, for $j\in \ABAOneSetK$,    then  $\RBC[\ltuple \IDABA, j \rtuple]$   eventually delivers the same output  $y_j^{(j)}$ at each honest node. Furthermore,  there exists  an honest Node~$i$, $i\in [1:n]$, with an initial message $\Me_{i}$ such that its $j$th encoded symbol    is equal to $y_j^{(j)}$, i.e.,   $\ECEnc_j(\networksizen,   t +1 , \Me_{i}) = y_j^{(j)}$, where $\ECEnc_j( )$ denotes the $j$th encoded symbol.
\end{lemma}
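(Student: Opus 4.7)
My plan is to derive Lemma~\ref{lm:OciorABApropertyABBAy} as a strengthening of Lemma~\ref{lm:OciorABApropertyABBARBC}. The first sentence of the statement is already exactly Lemma~\ref{lm:OciorABApropertyABBARBC}, so I would simply cite it: if $\ABBA[\ltuple \IDABA, j \rtuple]$ delivers $1$ at one honest node, then by Totality and Consistency of $\RBC[\ltuple \IDABA, j \rtuple]$, every honest node eventually delivers the same symbol $y_j^{(j)}$ from that $\RBC$ instance.

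The new content is the existence of an honest Node~$i$ whose own erasure-code encoding of its input $\Me_i$ produces $y_j^{(j)}$ in the $j$th coordinate. For this, I would rely on the standard validity property of $\ABBA$: if $\ABBA[\ltuple \IDABA, j \rtuple]$ delivers $1$, then at least one honest node must have input $1$ to that $\ABBA$ instance (otherwise, by validity of binary Byzantine agreement on an all-zero honest input, the only admissible output would be $0$). Pick any such honest node and call it Node~$i$.

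Now I trace back why Node~$i$ set its input to $\ABBA[\ltuple \IDABA, j \rtuple]$ equal to $1$. Inspecting Lines~\ref{line:abainputabbacondition}--\ref{line:abainputabba} of Algorithm~\ref{algm:OciorABAstar}, Node~$i$ only passes $\APBVAinput_i[j]=1$ after (a) receiving the delivered value $y_j^{(j)}$ from $\RBC[\ltuple \IDABA, j \rtuple]$, (b) waiting until it has itself computed its own encoded symbol $y_j^{(i)}$ via Line~\ref{line:abaECncoding}, and (c) verifying the equality $y_j^{(j)}=y_j^{(i)}$ in Line~\ref{line:abainputabbaconditionequalstar}. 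Since Node~$i$ computed $y_j^{(i)}$ as the $j$th coordinate of $\ECEnc(n,t+1,\Me_i)$, we obtain $\ECEnc_j(n,t+1,\Me_i)=y_j^{(i)}=y_j^{(j)}$, which is exactly the claimed identity.

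The only delicate point is invoking the $\ABBA$ validity property with the right quantifier (one honest node, not all); this is what lets us extract a witness honest node whose own encoded symbol matches $y_j^{(j)}$. Everything else is just bookkeeping on the algorithm's control flow, so no extra calculation is required.
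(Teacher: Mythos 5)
Your proposal is correct and follows essentially the same route as the paper's proof: cite Lemma~\ref{lm:OciorABApropertyABBARBC} for the common $\RBC$ output, observe that an output of $1$ from $\ABBA[\ltuple \IDABA, j \rtuple]$ forces at least one honest node to have input $1$, and then read off the equality $\ECEnc_j(\networksizen, t+1, \Me_i) = y_j^{(i)} = y_j^{(j)}$ from the check in Line~\ref{line:abainputabbaconditionequalstar}. No gaps.
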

\begin{proof}
From the result of Lemma~\ref{lm:OciorABApropertyABBARBC}, if $\ABBA[\ltuple \IDABA, j \rtuple]$  delivers an output $1$ at one honest node, for $j\in \ABAOneSetK$,   then $\RBC[\ltuple \IDABA, j \rtuple]$   eventually delivers the same output  $y_j^{(j)}$ at each honest node, for some $y_j^{(j)}$.  Furthermore, if $\ABBA[\ltuple \IDABA, j \rtuple]$  delivers an output $1$ at one honest node,   then at least one honest node has provided an input $1$ to $\ABBA[\ltuple \IDABA, j \rtuple]$.  If an honest node, say, Node~$i$, $i\in [1:n]$,  provides an input $1$ to $\ABBA[\ltuple \IDABA, j \rtuple]$, then its $j$th encoded symbol  should be equal to $y_j^{(j)}$, i.e., $\ECEnc_j(\networksizen,   t +1 , \Me_{i}) = y_j^{(j)}$ (see Line~\ref{line:abainputabbaconditionequalstar} of Algorithm~\ref{algm:OciorABAstar}). 
 \end{proof}

  \begin{lemma}  [Property~$3$ of $\ABBA$]  \label{lm:OciorABApropertyABBAn}
In $\OciorABAstar$,  all $n$ instances of  $\ABBA$  eventually deliver outputs at each honest node.  
\end{lemma}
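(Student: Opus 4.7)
The plan is to first establish that the $n-t$-output trigger in Line~\ref{line:abaoutputabba} of Algorithm~\ref{algm:OciorABAstar} fires at every honest node, and then bootstrap from this to conclude that all $n$ instances of $\ABBA$ eventually deliver at every honest node.

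First, I would focus on the $\ABBA$ instances whose corresponding $\RBC$ instance has an honest leader. For each honest Node~$i$, the leader of $\RBC[\ltuple \IDABA, i \rtuple]$ is honest, so by the Validity property of $\RBC$ (which invokes $\OciorRBC$), every honest node eventually delivers $y_i^{(i)}$. Consequently, upon such delivery, every honest node executes Lines~\ref{line:abainputabbacondition}--\ref{line:abainputabba} and passes some input into $\ABBA[\ltuple \IDABA, i \rtuple]$. Since there are at least $n-t$ honest indices $i$, and for each such $i$ every honest node supplies an input to $\ABBA[\ltuple \IDABA, i \rtuple]$, the Termination property of $\ABBA$ guarantees that each of these at least $n-t$ instances eventually delivers an output at every honest node.

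Next, at any honest node, once these $n-t$ instances deliver, the trigger in Line~\ref{line:abaoutputabba} fires, and the node passes input $0$ to every remaining $\ABBA[\ltuple \IDABA, j \rtuple]$ for which it has not yet supplied an input. Thus, whether via the $\RBC$-driven path in Line~\ref{line:abainputabba} or via the forced-$0$ default in Line~\ref{line:abaoutputabba}, every honest node eventually supplies an input to every one of the $n$ instances of $\ABBA$. Applying the Termination property of $\ABBA$ to each instance separately then yields that all $n$ instances eventually deliver outputs at every honest node.

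The delicate step is the first one: for an index $j$ where the leader of $\RBC[\ltuple \IDABA, j \rtuple]$ is dishonest, honest nodes might never deliver from that $\RBC$ and hence never provide an $\RBC$-induced input to $\ABBA[\ltuple \IDABA, j \rtuple]$, so we cannot directly invoke $\ABBA$ termination for such $j$ until the forced-$0$ mechanism engages. The key observation unlocking the argument is precisely that the threshold $n-t$ in Line~\ref{line:abaoutputabba} matches the number of honest nodes, so the at least $n-t$ honest-led $\RBC$ instances are sufficient, on their own, to activate the forced-$0$ mechanism at every honest node, after which termination of the remaining instances follows routinely.
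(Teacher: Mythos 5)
Your proposal is correct and follows essentially the same route as the paper: honest-led $\RBC$ instances deliver by Validity, so every honest node supplies inputs to at least $n-t$ instances of $\ABBA$, those instances terminate, the default-$0$ rule of Line~\ref{line:abaoutputabba} engages at every honest node, and then every one of the $n$ instances has inputs from all honest nodes and terminates. The paper packages the identical chain as a proof by contradiction (assuming no honest node completes all $n$ instances) and closes by invoking the Consistency property of $\ABBA$ to propagate complete outputs from one honest node to all, whereas you argue directly via Termination at each node; the substance is the same.
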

\begin{proof}
At first we will argue that there is at least one honest node delivering complete outputs from all $n$ instances of  $\ABBA$ in $\OciorABAstar$.
The argument is based on  proof by contradiction. Let us  assume that there is no honest node delivering complete outputs from all $n$ instances of $\ABBA$.  Under this assumption,   at least $n-t$ instances of  $\RBC$  eventually deliver outputs at each honest node, resulting from the Validity property of $\RBC$.  Then, all honest nodes eventually make inputs to at least  $n-t$ instances of $\ABBA$ (see Line~\ref{line:abainputabba} of Algorithm~\ref{algm:OciorABAstar}).  It   implies that at least   $n-t$ instances of $\ABBA$  eventually deliver outputs at each honest node.  Thus, the condition in Line~\ref{line:abaoutputabba} of Algorithm~\ref{algm:OciorABAstar} is eventually triggered and all honest nodes eventually make inputs to each of   $n$ instances of $\ABBA$, which suggests that all $n$ instances of  $\ABBA$  eventually deliver outputs at each honest node, resulting from the Termination property of Byzantine agreement.   This result contradicts with the assumption. Therefore, there is at least one honest node delivering complete outputs from all $n$ instances of  $\ABBA$ in $\OciorABAstar$.

Since at least one honest node delivers complete outputs from all $n$ instances of  $\ABBA$, then from the Consistency  property of Byzantine agreement it is guaranteed that every honest node eventually delivers   complete outputs from all $n$ instances of  $\ABBA$. 
\end{proof}

   \begin{lemma}  [Property~$4$ of $\ABBA$]  \label{lm:OciorABAvalidity}
In $\OciorABAstar$,  if  all  honest nodes input the same  value $\wv$, then   at least $t+1$ instances of $\ABBA$  eventually output $1$.  
\end{lemma}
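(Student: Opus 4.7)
The plan is to exploit the fact that the erasure encoding is a deterministic function of the input message, so all honest nodes produce identical codewords when they share the same input $\wv$. From there, validity of $\RBC$ and $\ABBA$ will force the $\ABBA$ instances corresponding to honest leaders to output $1$, and since there are at least $n-t \geq 2t+1$ honest nodes, we will obtain at least $t+1$ such instances.

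More concretely, first I would observe that if every honest node inputs $\wv$, then by Line~\ref{line:abaECncoding} of Algorithm~\ref{algm:OciorABAstar} every honest Node~$i$ computes the same vector of encoded symbols, so $y_j^{(i)} = y_j^{(i')}$ for every $j \in [1:n]$ and every pair of honest nodes $i, i'$. Next, for any honest leader $j$, the $\RBC[\ltuple \IDABA, j\rtuple]$ instance receives input $y_j^{(j)}$ from an honest leader; by the Validity and Totality properties of $\RBC$, every honest node eventually delivers the value $y_j^{(j)}$ from this instance.

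Then I would combine these two facts: upon delivering $y_j^{(j)}$ from $\RBC[\ltuple \IDABA, j\rtuple]$, honest Node~$i$ waits until its own $y_j^{(i)}$ has been computed (as in the wait step after Line~\ref{line:abainputabbacondition}) and compares the two. Because $y_j^{(j)} = y_j^{(i)}$, Node~$i$ sets $\APBVAinput_i[j] \gets 1$ in Line~\ref{line:abainputabbaconditionequalstar} and passes $1$ as input to $\ABBA[\ltuple \IDABA, j\rtuple]$ in Line~\ref{line:abainputabba}. Thus, for every honest leader $j$, every honest node inputs $1$ to $\ABBA[\ltuple \IDABA, j\rtuple]$, and the Validity property of $\ABBA$ guarantees that this instance eventually outputs $1$.

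Finally, since $n \geq 3t+1$, there are at least $n - t \geq 2t + 1 \geq t+1$ honest leaders, which yields at least $t+1$ instances of $\ABBA$ eventually outputting $1$. The argument is essentially routine once the identical-codeword observation is made; the only subtlety is making sure the wait for the completion of the local encoding (just before Line~\ref{line:abainputabbaconditionequalstar}) is invoked so that the comparison $y_j^{(j)} = y_j^{(i)}$ is well defined, but this is already handled by the protocol.
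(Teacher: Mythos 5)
Your argument has a genuine gap: it ignores the default-to-zero step in Algorithm~\ref{algm:OciorABAstar} (Line~\ref{line:abaoutputabba} and the line that follows it), by which an honest node, upon receiving outputs from $n-t$ instances of $\ABBA$, passes input $0$ to every $\ABBA$ instance to which it has not yet provided an input. Because of this step, your claim that ``for every honest leader $j$, every honest node inputs $1$ to $\ABBA[\ltuple \IDABA, j\rtuple]$'' is false in general, so the Validity property of $\ABBA$ cannot be applied the way you apply it. Concretely, the adversary can delay the messages of $\RBC[\ltuple \IDABA, j_0\rtuple]$ for some honest leader $j_0$ until $n-t$ other $\ABBA$ instances (say those of the $t$ dishonest leaders together with $n-2t$ honest ones) have already delivered outputs at the honest nodes; at that point every honest node inputs $0$ to $\ABBA[\ltuple \IDABA, j_0\rtuple]$, and that instance then outputs $0$. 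Hence up to $t$ honest-leader instances may fail to output $1$, and your conclusion that at least $n-t\geq 2t+1$ instances output $1$ overshoots what the protocol guarantees.

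The paper closes this gap with a timing argument that your proof is missing: consider the \emph{first} moment at which any honest node satisfies the condition in Line~\ref{line:abaoutputabba}. Up to that moment no honest node has defaulted to $0$, so---by your (correct) identical-codeword observation and Line~\ref{line:abainputabbaconditionequalstar}---every input so far provided by an honest node to an honest-leader instance is $1$. Among the $n-t$ instances whose outputs have already been delivered at that node, at most $t$ have dishonest leaders, so at least $n-t-t\geq t+1$ have honest leaders, and each of these must have delivered output $1$. Your setup (deterministic encoding, $\RBC$ Validity and Totality for honest leaders) is the right starting point and matches the paper's; what must be added is the argument that restricts attention to instances that complete before any node triggers Line~\ref{line:abaoutputabba}, which is where the count $n-2t$ rather than $n-t$ comes from.
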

\begin{proof}
 If all honest nodes input the same value $\wv$, then before the condition in Line~\ref{line:abaoutputabba} of Algorithm~\ref{algm:OciorABAstar} is triggered for the first time among all honest nodes,  there is no honest node providing an input of $0$ to any instance of $\ABBA[\ltuple \IDABA, j \rtuple]$, $\forall j\in [1:n]\setminus \Fc$  (see Line~\ref{line:abainputabbaconditionequalstar} of Algorithm~\ref{algm:OciorABAstar}).  
Therefore, when  the condition in Line~\ref{line:abaoutputabba} of Algorithm~\ref{algm:OciorABAstar} is triggered for the first time among all honest nodes, at least   $n-t -t\geq t+1$   instances of $\ABBA$ have delivered  output $1$, based on the Validity property of Byzantine agreement.  
\end{proof}

\begin{algorithm}
\caption{$\OciorABA$  protocol,  with an identifier $\IDABA$,  for $n\geq 3t+1$. Code is shown for $\Node_{\thisnodeindex}$.}  \label{algm:OciorABA}
\begin{algorithmic}[1]
\vspace{5pt}    
\footnotesize

 \Statex   \emph{//   **   $\RBC[\ltuple \IDABA, j \rtuple]$ denotes a    reliable broadcast ($\RBC$) instance, where Node~$j$ is the leader,  which calls  $\OciorRBC$ protocol \cite{ChenOciorCOOL:24} **}   
 \Statex   \emph{//   **    Let us define $\missing:=2 $  **}     

\Statex

  \State  Initially set     $\APBVAinput_i[j]\gets \missing $, $\forall j\in [1:n]$

\State {\bf upon} receiving input  $\Me_{i}$  {\bf do}:  
\Indent  
	
	\State  $[y_1^{(\thisnodeindex)}, y_2^{(\thisnodeindex)}, \cdots, y_{n}^{(\thisnodeindex)}] \gets \ECEnc(\networksizen,   t    +1 , \Me_{\thisnodeindex})$     \label{line:OciorABAabaECncoding}    \quad \quad \quad \quad\quad\quad\quad\quad\quad\quad\quad\quad  \emph{// erasure code encoding }
	
	\State  $\Pass$  $y_{\thisnodeindex}^{(\thisnodeindex)}$ into $\RBC[\ltuple \IDABA, \thisnodeindex \rtuple]$ as an input, where Node~$\thisnodeindex$ is the leader

\EndIndent

\State {\bf upon} delivery of $y_j^{(j)}$  from   $\RBC[\ltuple \IDABA, j \rtuple]$, for $j\in [1:n]$  {\bf do}:    \label{line:OciorABAabainputabbacondition}     
\Indent  

	\State $\wait$ until $y_j^{(\thisnodeindex)}$ has been computed  \quad \quad \quad \quad \quad \quad \quad \quad \quad \quad \quad \quad   \quad \quad \quad \quad \emph{//     wait for executing Line~\ref{line:OciorABAabaECncoding}}
    	\IfThenElse {$y_j^{(j)} = y_j^{(\thisnodeindex)}$}  {set $\APBVAinput_i[j] \gets 1$} {set $\APBVAinput_i[j] \gets 0$}    \label{line:abainputabbaconditionequal}    
	\State  $\Pass$  $\APBVAinput_i[j]$ into $\APVA[\IDABA]$ as an input       \label{line:OciorABAabainputabba}      \quad  \quad\quad   \quad \quad \quad  \quad\quad \quad \quad \quad  \quad\quad \emph{//      asynchronous partial vector agreement ($\APVA$)     }

\EndIndent

\State {\bf upon}  $\APVA[\IDABA]$  outputs $\APBVAoutput$   {\bf do}:  	         \label{line:OciorABAAPVAtrigger} 
 
\Indent  
	\State  let $\ABAOneSet  \subseteq [1:n]$ denote the indices of all elements in $\APBVAoutput$    that  are  equal to  $1$, i.e., $\ABAOneSet=\{j:  \APBVAoutput[j] = 1, j \in[1:n]\}$
	\If {   $|\ABAOneSet| < t+1$}        \label{line:OciorABAabaLESSt1} 
		\State   $\Output$  $\defaultvalue$ and $\terminate$ 	  
	\Else
		\State  let $\ABAOneSetK  \subseteq \ABAOneSet$ denote the first $t+1$ smallest    values in   $\ABAOneSet$      \label{line:OciorABAabaGEQt1a} 
 		\State $\wait$ for the  delivery of $y_j^{(j)}$  from   $\RBC[\ltuple \IDABA, j \rtuple]$, $\forall j\in \ABAOneSetK$         \label{line:OciorABARBCoutputs} 
		\State   $\hat{\Me} \gets \ECDec(n, t+1, \{y_j^{(j)}\}_{j\in \ABAOneSetK})$      \label{line:OciorABAabaECdec}   \quad \quad \quad \quad\quad\quad\quad\quad\quad\quad\quad\quad\quad\quad\quad\quad \quad\quad\quad\quad \emph{// erasure code decoding }
		\State $\Output$  $\hat{\Me}$ and $\terminate$ 	    \label{line:OciorABAabaGEQt1b} 
	\EndIf

\EndIndent

\end{algorithmic}
\end{algorithm}

\begin{algorithm}  
\caption{$\APVA$  protocol, with identifier $\IDMVBA$, for $n\geq 3t+1$. Code is shown for $\Node_{\thisnodeindex}$.}    \label{algm:APVA} 
\begin{algorithmic}[1]
\vspace{5pt}    
\footnotesize

 \Statex   \emph{//   **    $\IDABANew :=f(\IDABA)$ represents a new identity generated as a function of   $\IDABA$;    Let us define $\missing:=2 $ **}    
   \Statex   \emph{//   **    $\NonmissingElementSet(\Me)$ denotes the set of indices of all non-missing  elements of  $\Me$,  i.e., $\NonmissingElementSet(\Me):=\{j:  \Me[j] \neq \missing, j \in[1:n]\}$.  **}

 \State {\bf upon} receiving   input    $\APBVAinput_i[j] \in  \{1,0\}$  for $j\in [1:n]$ {\bf do}:      
\Indent 
 	\State  $\Pass$  $\APBVAinput_i[j]$ into $\ACIDd[\IDABA]$ as an input       
 \EndIndent	
 
\State {\bf upon} delivery of output $[ \ReadyRecord^{(1)}, \ReadyRecord^{(0)},  \FinishRecord^{(1)},  \FinishRecord^{(0)}, \RBCReadyindicator,  \RBCFinishindicator]$	  from    $\ACIDd[ \IDMVBA ]$  {\bf do}:       \label{line:APVAACIDoutput}  
\Indent  
	\For {$\electionround \inset [1:n]$}   \label{line:APVAABAround} 
	
		\State  $\electionoutput \gets \Election[\ltuple \IDMVBA,  \electionround \rtuple]$    \label{line:APVAElection}     \quad   \quad \quad \quad \quad 	\quad \quad \quad   \quad \quad \quad  \quad   \emph{//      an election protocol } 
		\State  $\ABBAoutputNew \gets \ABBBA[\ltuple \IDABANew,   \electionoutput, 0\rtuple](\RBCReadyindicator[\electionoutput], \RBCFinishindicator[\electionoutput])$    \label{line:APVAABBBA}         \  \quad      \emph{//      asynchronous  biased binary BA   ($\ABBBA$, calling Algorithm~2 of \cite{ChenOciorMVBA:24})   } 
		\State  $\ABAoutputNew\gets \ABBA[\ltuple \IDABANew,  \electionoutput\rtuple](\ABBAoutputNew)$   \label{line:APVAABBA}                  \quad \quad  \quad\quad \quad \quad \quad  \quad\quad \emph{//     an asynchronous  binary BA  ($\ABBA$)   } 
		
		\If {$\ABAoutputNew \eqlog 1$}     \label{line:APVAABAoutput1} 
 				\State $\wait$ for $\RBC[\ltuple \IDABANew, \electionoutput \rtuple]$ to output a vector $\ConfirmRecord_{\electionoutput}$	          \label{line:APVARBCout} 
 				\If {$|\NonmissingElementSet(\ConfirmRecord_{\electionoutput})| \geq n-t$	}        \label{line:APVANNMissingThreshold} 
 					\State $\ABBAoutput_{\electionoutput, j} \gets \ABBBA[\ltuple \IDMVBA,  \electionoutput, j\rtuple] (\ReadyRecord^{(\ConfirmRecord_{\electionoutput}[j])}[j], \FinishRecord^{(\ConfirmRecord_{\electionoutput}[j])}[j])$,  $\forall j\in \NonmissingElementSet(\ConfirmRecord_{\electionoutput})$	      \label{line:APVAABBBAinput}          \quad    \emph{//    $n-t$ parallel   $\ABBBA$ }  
 					\IfThenElse { $\sum_{j\in \NonmissingElementSet(\ConfirmRecord_{\electionoutput})} \ABBAoutput_{\electionoutput, j}  \eqlog |\Mc(\ConfirmRecord_{\electionoutput})|$}  {set $\ABBAoutput' \gets 1$} {set $\ABBAoutput' \gets 0$}          \label{line:APVAABBBAoutput}     
 					\State  $\ABAoutput'\gets \ABBA[\ltuple \IDMVBA,  \electionoutput\rtuple](\ABBAoutput')$    \label{line:APVAABBA2}   
					\If {$\ABAoutput'=1$}     \label{line:APVAoutCondition}
						 \State     $\Output$  $\ConfirmRecord_{\electionoutput}$  and $\terminate$ 		      \label{line:APVAout} 
					\EndIf
					
				\EndIf

		\EndIf

	\EndFor

\EndIndent

\State {\bf upon}   delivery of   $\ConfirmRecord_{\thisnodeindex}$	 from  $\ACIDd[ \IDMVBA ]$    {\bf do}:  
\Indent  
 	 \State  $\Pass$  $\ConfirmRecord_{\thisnodeindex}$  into $\RBC[\ltuple \IDABANew, \thisnodeindex \rtuple]$  as an input        \label{line:APVARBCinputthisnode} 
\EndIndent

\State {\bf upon}   delivery of output $\ConfirmRecord_{j}$	 from   $\RBC[\ltuple \IDABANew, j \rtuple]$ for $j\in [1:n]$     {\bf do}:  
\Indent  
 	 \State  $\deliver$    $\ConfirmRecord_{j}$ into $\ACIDd[ \IDMVBA ]$ as an input        \label{line:APVARBCoutput} 
\EndIndent

\end{algorithmic}
\end{algorithm}

\begin{algorithm}
\caption{$\ACIDd$  protocol with identifier $\IDABA$. Code is shown for $\Node_{\thisnodeindex}$.}  \label{algm:ACIDd}
\begin{algorithmic}[1]
\vspace{5pt}    
\footnotesize

 \Statex   \emph{//   **    $\IDABANew :=f(\IDABA)$ represents a new identity generated as a function of   $\IDABA$ **}    
  \Statex   \emph{//   **    Let us define $\missing:=2 $  **}    
  
 \Statex

 \State  Initially set  $\ReadyRecord^{(\binaryvalue)}[j]\gets 0$,   $\FinishRecord^{(\binaryvalue)}[j]\gets 0$, $\RBCReadyindicator[j]\gets 0$,  $\RBCFinishindicator [j]\gets 0$,  $\forall j\in [1:n], \forall \binaryvalue \in \{1,0\}$;  set  $\ConfirmRecord_{\thisnodeindex} [j]\gets \missing $, $\forall j\in [1:n]$;   and $\ConfirmCount\gets 0$

\Statex

\Statex   \emph{//   **   vote  **} 	 	
\State {\bf upon} receiving an input  value  $\APBVAinput_i[j] \in  \{1,0\}$, for $j\in[1:n]$   {\bf do}:       \label{line:ACIDdVoteBegin} 
\Indent  

            	 \State  $\send$   $\ltuple   \VOTE, \IDABA,  j ,  \APBVAinput_i[j] \rtuple$     to all nodes                                   
	  
\EndIndent

\Statex 

\Statex   \emph{//   ** ready **}  

\State {\bf upon} receiving  $t+1$  $\ltuple   \VOTE, \IDABA,  j , \binaryvalue  \rtuple$ messages from distinct nodes, for   the same  $\binaryvalue \in \{1,0\}$, and same $j\in[1:n]$ {\bf do}:  
\Indent  
	\If {$\ltuple   \VOTE, \IDABA,  j , \binaryvalue \rtuple$   not yet sent }     
		\State $\send$ $\ltuple   \VOTE, \IDABA,  j , \binaryvalue \rtuple$ to  all nodes       	
	\EndIf
	
	\State set $\ReadyRecord^{(\binaryvalue)}[j] \gets 1$                                                      \label{line:ACIDdReady} 
	\State $\send$ $\ltuple   \READY, \IDABA,  j , \binaryvalue  \rtuple$    to  all nodes         
\EndIndent

\Statex

\Statex   \emph{//   ** finish **}  

\State {\bf upon} receiving  $n-t$  $\ltuple   \READY, \IDABA,  j , \binaryvalue  \rtuple$  messages from distinct nodes,   for   the same  $\binaryvalue \in \{1,0\}$, and same $j\in[1:n]$ {\bf do}:  
\Indent  
	\State set $\FinishRecord^{(\binaryvalue)}[j] \gets 1$          \label{line:ACIDdFinish} 
	\State $\send$ $\ltuple   \FINISH, \IDABA,  j , \binaryvalue  \rtuple$    to  all nodes              \label{line:ACIDdFinishSend} 
\EndIndent

\Statex

\Statex   \emph{//   **   confirm **}

\State {\bf upon} receiving   $n-t$  $\ltuple   \FINISH, \IDABA,  j , \binaryvalue  \rtuple$  messages from distinct nodes, for   the same  $\binaryvalue \in \{1,0\}$, and same $j\in[1:n]$ {\bf do}:     \label{line:ACIDdConfirmCond} 
\Indent  
	\If {$\ConfirmRecord_{\thisnodeindex}[j] = \missing$ }     
		\State set $\ConfirmRecord_{\thisnodeindex}[j] \gets \binaryvalue$; $\ConfirmCount\gets \ConfirmCount+1$      \label{line:ACIDdConfirm}   
	\EndIf
 
\EndIndent

\Statex

\Statex   \emph{//   ** reliable broadcast of $\ConfirmRecord_{\thisnodeindex}$**}

\State {\bf upon}    $\ConfirmCount= n-t$  {\bf do}:        \label{line:ACIDdcntCond}
\Indent  
 	 \State  $\deliver$    $\ConfirmRecord_{\thisnodeindex}$             \quad \quad   \emph{//      $\ConfirmRecord_{\thisnodeindex}$ will be passed into $\RBC[\ltuple \IDABANew, \thisnodeindex \rtuple]$ as an input   (see Line~\ref{line:APVARBCinputthisnode} of Algorithm~\ref{algm:APVA}) }

\EndIndent

\Statex

\Statex   \emph{//   ** $\RBC$-ready **}  

\State {\bf upon} receiving an input  value  $\ConfirmRecord_{j}$, for $j\in[1:n]$   {\bf do}:      \label{line:ACIDdReadyRBCCond}      \quad \quad   \emph{//   $\ConfirmRecord_{j}$ is  the  output from   $\RBC[\ltuple \IDABANew, j \rtuple]$ (see Line~\ref{line:APVARBCoutput} of Algorithm~\ref{algm:APVA})}        
\Indent
 	\State set $\RBCReadyindicator   [j] \gets 1$             \label{line:ACIDdReadyRBC} 
	\State $\send$ $\ltuple   \READY, \IDABANew,  j   \rtuple$    to  all nodes         
\EndIndent

\Statex

\Statex   \emph{//   ** $\RBC$- finish **}  

\State {\bf upon} receiving  $n-t$  $\ltuple   \READY, \IDABANew,  j   \rtuple$  messages from distinct nodes for the same  $j\in[1:n]$ {\bf do}:  
\Indent  
 	\State set $\RBCFinishindicator   [j] \gets 1$          \label{line:ACIDdFinishRBC} 
	\State $\send$ $\ltuple   \FINISH, \IDABANew,  j   \rtuple$    to  all nodes         
\EndIndent

\Statex

\Statex   \emph{//   ** $\RBC$- confirm and vote for election**}

\State {\bf upon} receiving   $n-t$  $\ltuple   \FINISH, \IDABANew,  \thisnodeindex   \rtuple$   messages from distinct nodes {\bf do}:    \label{line:ACIDdElectionSendCond}
\Indent  
	\State $\send$ $(\ELECTION, \IDMVBA)$ to  all nodes    \label{line:ACIDdElectionSend}      \quad \quad   \emph{//      $\ACD[\ltuple \IDMVBA,  \thisnodeindex \rtuple]$ is complete at this point   } 	
 
\EndIndent

\Statex

\Statex   \emph{//   ** confirm for election **}

\State {\bf upon} receiving   $n-t$  $(\ELECTION, \IDABA)$  messages from distinct nodes and  $(\CONFIRM, \IDABA)$  not yet sent {\bf do}:      \label{line:ACIDdConfirmEleBegin} 
\Indent  
	\State $\send$ $(\CONFIRM, \IDABA)$ to  all nodes       	
 
\EndIndent

\State {\bf upon} receiving   $t+1$  $(\CONFIRM, \IDABA)$  messages from distinct nodes and  $(\CONFIRM, \IDABA)$  not yet sent {\bf do}:  
\Indent  
	\State $\send$ $(\CONFIRM, \IDABA)$ to  all nodes       	
 
\EndIndent

\Statex

\Statex   \emph{//   ** return and stop **} 
\State {\bf upon} receiving   $2t+1$  $(\CONFIRM, \IDABA)$  messages from distinct nodes {\bf do}:  
\Indent  
	\If {$(\CONFIRM, \IDABA)$   not yet sent }     
		\State $\send$ $(\CONFIRM, \IDABA)$ to  all nodes       	
	\EndIf
	\State  $\Return$  $[ \ReadyRecord^{(1)}, \ReadyRecord^{(0)},  \FinishRecord^{(1)},  \FinishRecord^{(0)}, \RBCReadyindicator,  \RBCFinishindicator]$	     \label{line:ACIDdConfirmEleEnd} 
\EndIndent

\end{algorithmic}
\end{algorithm}

\begin{algorithm} 
\caption{$\ABBBA$  protocol  (Algorithm~2 of \cite{ChenOciorMVBA:24}) with   $\IDMVBAtilde:=\ltuple \IDMVBA,  \electionoutput, j \rtuple$.  Code is shown for   $\Node_{\thisnodeindex}$.  }    \label{algm:ABBBA} 
\begin{algorithmic}[1]
\vspace{5pt}    
\footnotesize
 
\State {\bf upon} receiving  input  $(\abbainputA, \abbainputB)$, for some $\abbainputA, \abbainputB \in \{0,1\}$   {\bf do}:  		
\Indent
	\State $\ABBACountA \gets 0; \ABBACountB \gets 0; \ABBACountC \gets 0$   
	\State $\send$   $(\ABBAVALUE,  \IDMVBAtilde, \abbainputA,  \abbainputB)$   to all nodes
	
	\If {$(\abbainputA \eqlog 1) \OR (\abbainputB \eqlog 1)$ } $\Output$  $1$ and $\terminate$       \label{line:ABBBAoutput1} 
	\EndIf

	\State {\bf wait} for  at least one of the following events: 1) $\ABBACountA \geq \networkfaultsizet+1$,  2) $\ABBACountB \geq \networkfaultsizet+1$,  or 3)  $\ABBACountC \geq \networksizen-\networkfaultsizet$   
	\Indent
		\If {$(\ABBACountA \geq \networkfaultsizet+1) \OR (\ABBACountB \geq \networkfaultsizet+1)$}		    \label{line:ABBBAoneoutputCond} 
			\State  $\Output$  $1$ and $\terminate$        \label{line:ABBBAoneoutput} 
		\ElsIf {$\ABBACountC \geq \networksizen-\networkfaultsizet$}		     \label{line:zerocondition} 
			\State  $\Output$  $0$ and $\terminate$     \label{line:ABBBAzero} 
		\EndIf
	\EndIndent	
				
\EndIndent

\State {\bf upon} receiving  $(\ABBAVALUE, \IDMVBAtilde, \abbainputA,  \abbainputB)$ from  $\Node_j$ for the first time, for some $\abbainputA, \abbainputB \in \{0,1\}$ {\bf do}:  
\Indent  
	\State  $\ABBACountA \gets \ABBACountA +\abbainputA; \ABBACountB \gets \ABBACountB +\abbainputB$   
	\If {$\abbainputB \eqlog 0$ } $\ABBACountC \gets \ABBACountC +1$
	\EndIf			

\EndIndent

\end{algorithmic}
\end{algorithm}

\begin{figure} [b]
\centering
\includegraphics[width=18cm]{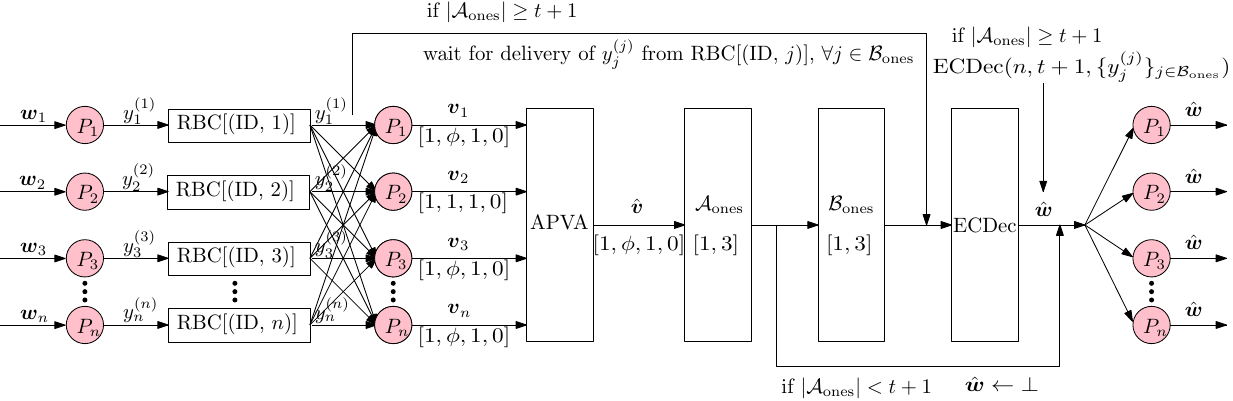}
\caption{A block diagram of the proposed $\OciorABA$ protocol with an identifier $\IDMVBA$. Here  $\APBVAoutput$ denotes the output vector of $\APVA$. 
$\ABAOneSet  \subseteq [1:n]$ denotes  the indices of all elements in  $\APBVAoutput$    that  are  equal to  $1$, i.e., $\ABAOneSet=\{j:  \APBVAoutput[j] = 1, j \in[1:n]\}$, while $\ABAOneSetK  \subseteq \ABAOneSet$ denotes the first $t+1$ smallest  values in   $\ABAOneSet$. The description focuses on the example with $n=4$ and $t=1$. 
}
\label{fig:OciorABA}
\end{figure}

\begin{figure} 
\centering
\includegraphics[width=18cm]{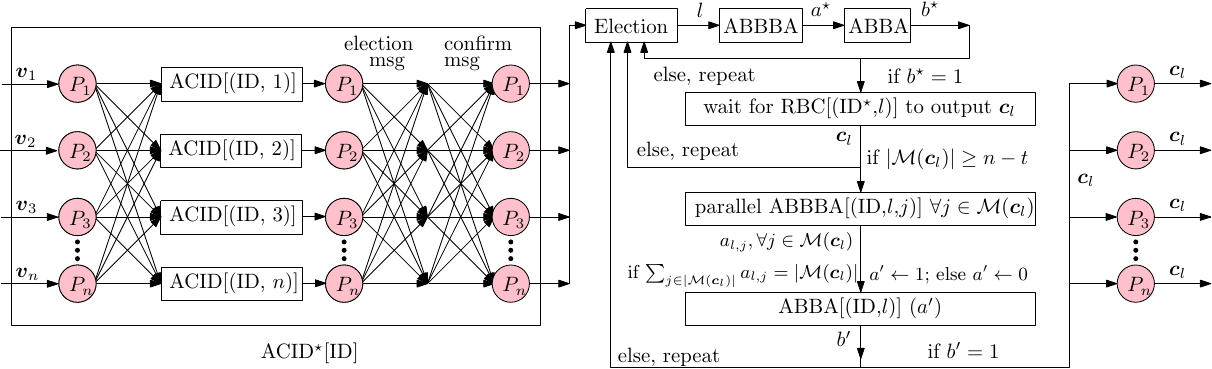}
\caption{A block diagram of the proposed $\APVA$ protocol with an identifier $\IDMVBA$.  Details of the $\ACIDd$ protocol are presented in Fig.~\ref{fig:ACIDstar}. Here $\NonmissingElementSet(\Me)$ denotes the set of indices of all non-missing  elements of the vector  $\Me$,  i.e., $\NonmissingElementSet(\Me):=\{j:  \Me[j] \neq \missing, j \in[1:n]\}$. 
}
\label{fig:APVA}
\end{figure}

\begin{figure} 
\centering
\includegraphics[width=17cm]{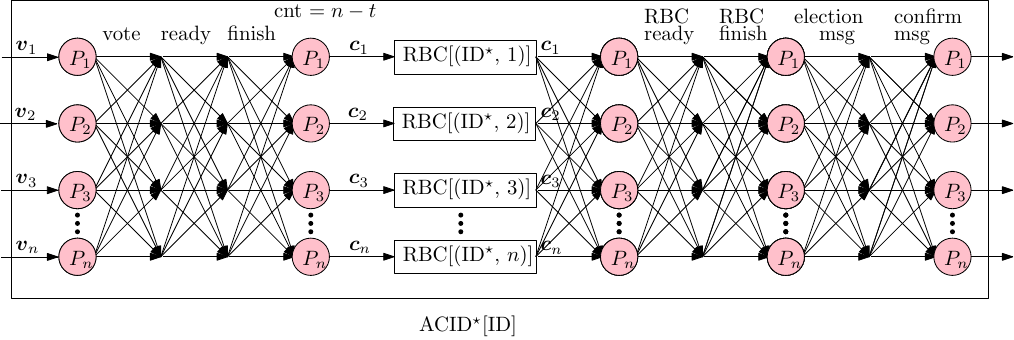}
\caption{A block diagram of the proposed $\ACIDd$ protocol with an identifier $\IDMVBA$. 
}
\label{fig:ACIDstar}
\end{figure}

\section{$\OciorABA$}    \label{sec:OciorABA}

This proposed $\OciorABA$ is an improved error-free,  information-theoretically secure asynchronous $\BA$  protocol.  $\OciorABA$ does not rely on any   cryptographic assumptions, such as  signatures or hashing,  except for  the common coin assumption. 
This protocol achieves the asynchronous $\BA$ consensus on an $\ell$-bit message with an expected communication complexity of $O(n\ell + n^3 \log \alphabetsize )$  bits  and an expected  round complexity of $O(1)$ rounds,  under the optimal resilience  condition $n \geq 3t + 1$.

\subsection{Overview of $\OciorABA$}     

The proposed  $\OciorABA$ is described in Algorithm~\ref{algm:OciorABA}, together with  Algorithms~\ref{algm:APVA}-\ref{algm:ABBBA}. Fig.~\ref{fig:OciorABA} presents a block diagram of the proposed $\OciorABA$ protocol. Additionally, Fig.~\ref{fig:APVA} illustrates  a block diagram of the  $\APVA$ protocol, which  is utilized  in $\OciorABA$, while  Fig.~\ref{fig:ACIDstar} shows a block diagram of the   $\ACIDd$ protocol, which is employed in $\APVA$.   The main steps of $\OciorABA$ are outlined below.  
\begin{itemize}
\item    Erasure code  encoding: Each honest Node~$\thisnodeindex$ first encodes its initial message into $n$ symbols, $y_1^{(\thisnodeindex)}, y_2^{(\thisnodeindex)}, \cdots, y_{n}^{(\thisnodeindex)}$,  using an erasure code. 
\item    $\RBC$:  Then Node~$\thisnodeindex$ passes   $y_{\thisnodeindex}^{(\thisnodeindex)}$ into $\RBC[\ltuple \IDABA, \thisnodeindex \rtuple]$ as an input. 
    $\RBC[\ltuple \IDABA, \thisnodeindex \rtuple]$ represents  a   reliable broadcast  instance where Node~$\thisnodeindex$ is the leader.  $\RBC[\ltuple \IDABA, \thisnodeindex \rtuple]$ calls  the $\OciorRBC$ protocol \cite{ChenOciorCOOL:24}. 
\item    $\APVA$:   Upon  delivery of $y_j^{(j)}$  from   $\RBC[\ltuple \IDABA, j \rtuple]$, Node~$\thisnodeindex$ sets   $\APBVAinput_i[j] \gets 1$ if $y_j^{(j)} = y_j^{(\thisnodeindex)}$; otherwise,  Node~$\thisnodeindex$ sets  $\APBVAinput_i[j] \gets 0$.  Then, Node~$\thisnodeindex$ passes   $\APBVAinput_i[j]$ into $\APVA[\IDABA]$ as an input.  
\item     Erasure code  decoding:   Upon  delivery of output $\APBVAoutput$ from $\APVA[\IDABA]$, let $\ABAOneSet  \subseteq [1:n]$ denote the indices of all elements in $\APBVAoutput$    that  are  equal to  $1$, i.e., $\ABAOneSet=\{j:  \APBVAoutput[j] = 1, j \in[1:n]\}$. 
\begin{itemize}
\item  If $|\ABAOneSet| < t+1$, then each honest node outputs   $\defaultvalue$ and terminates. 
\item  If $|\ABAOneSet| \geq  t+1$,    let $\ABAOneSetK  \subseteq \ABAOneSet$ denote the first $t+1$ smallest  values in   $\ABAOneSet$.
In this case, each honest node waits for the  delivery of $y_j^{(j)}$  from   $\RBC[\ltuple \IDABA, j \rtuple]$, $\forall j\in \ABAOneSetK$, and then decodes  
	   $\hat{\Me} \gets \ECDec(n, t+1, \{y_j^{(j)}\}_{j\in \ABAOneSetK})$      using erasure code decoding. Each honest node     then 
outputs  $\hat{\Me}$ and terminates. 
\end{itemize} 
\end{itemize}

\subsection{Analysis of $\OciorABA$}

The main results of  $\OciorABA$ are summarized in the following theorems.

 \begin{theorem}  [Termination]  \label{thm:OciorABAfastterminate}
Given $n\geq 3t+1$,  each  honest node  eventually  outputs a  message and terminate   in $\OciorABA$.  
\end{theorem}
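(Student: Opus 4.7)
The plan is to mirror the termination argument given for $\OciorABAstar$, but with the instances-of-$\ABBA$ parallel composition replaced by the single $\APVA[\IDABA]$ invocation. First I would verify that $\APVA[\IDABA]$ eventually outputs at every honest node, and then handle the two branches of the $|\ABAOneSet|$ test separately.

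To trigger the $\APVA$ output, I would invoke its Termination property, which requires that every honest node eventually feeds a non-missing value into at least $n-t$ common coordinates. For each honest leader $\Node_j$, the Validity property of $\RBC$ guarantees that $y_j^{(j)}$ is eventually delivered from $\RBC[\ltuple \IDABA, j \rtuple]$ at every honest node. The set of honest leaders has cardinality at least $n-t$, so by Line~\ref{line:OciorABAabainputabbacondition}--\ref{line:OciorABAabainputabba} of Algorithm~\ref{algm:OciorABA} every honest node eventually writes a binary (non-missing) value into $\APBVAinput_i[j]$ for all such $j$. These $n-t$ coordinates are the same across honest nodes, so the Termination precondition of $\APVA$ is satisfied and each honest node eventually receives an output $\APBVAoutput$ at Line~\ref{line:OciorABAAPVAtrigger}.

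Once $\APBVAoutput$ is delivered, there are two cases. If $|\ABAOneSet|<t+1$, the node immediately outputs $\defaultvalue$ and terminates. Otherwise $|\ABAOneSetK|=t+1$, and I must show that the wait in Line~\ref{line:OciorABARBCoutputs} for $\{y_j^{(j)}\}_{j\in \ABAOneSetK}$ completes at every honest node. Fix $j\in \ABAOneSetK$; then $\APBVAoutput[j]=1$. By the Validity property of $\APVA$, at least one honest $\Node_{i^\star}$ must have fed $\APBVAinput_{i^\star}[j]=1$ into $\APVA[\IDABA]$. Inspecting Line~\ref{line:abainputabbaconditionequal} of Algorithm~\ref{algm:OciorABA}, this can only happen after $\Node_{i^\star}$ has delivered $y_j^{(j)}$ from $\RBC[\ltuple \IDABA, j \rtuple]$. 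By the Totality and Consistency properties of $\RBC$, every honest node eventually delivers the same symbol $y_j^{(j)}$. Hence each honest node eventually holds all $t+1$ symbols needed at Line~\ref{line:OciorABAabaECdec}, decodes $\hat{\Me}$, outputs it and terminates.

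The main obstacle I anticipate is the first step: reducing termination of $\APVA$ to the Validity of the underlying $\RBC$ instances, since this is the one place where the chain of eventualities is not self-contained within $\OciorABA$ and instead appeals to the newly introduced primitive. Everything downstream (the case split on $|\ABAOneSet|$, the $\RBC$ wait, and the erasure-code decoding) is a direct translation of the corresponding argument for $\OciorABAstar$ and relies only on the stated Validity/Consistency/Totality of $\RBC$ together with the Validity clause of $\APVA$.
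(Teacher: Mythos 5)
Your proposal is correct and follows essentially the same route as the paper: the paper's proof likewise reduces to (i) showing $\APVA[\IDABA]$ eventually outputs at every honest node via the $\RBC$ Validity property feeding at least $n-t$ common non-missing coordinates into $\APVA$'s Termination precondition (Lemma~\ref{lm:OciorABAsameAPVAout}), and (ii) resolving the $|\ABAOneSet|\geq t+1$ branch by using the Validity of $\APVA$ to locate an honest node that delivered $y_j^{(j)}$ and then the Totality and Consistency of $\RBC$ (Lemma~\ref{lm:OciorABAfastpropertyABBARBC}). The only cosmetic difference is that the paper packages step (i) as a short proof by contradiction and additionally invokes $\APVA$'s Consistency to get a common output vector, which is not strictly needed for termination alone.
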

\begin{proof}
From Lemma~\ref{lm:OciorABAsameAPVAout}, in $\OciorABA$, every  honest node eventually outputs the same vector $\APBVAoutput$ from $\APVA[\IDABA]$, for some $\APBVAoutput$. 
Based on this conclusion,  the condition in Line~\ref{line:OciorABAAPVAtrigger} of Algorithm~\ref{algm:OciorABA} is eventually triggered at each honest node.  At this point, if the condition in  Line~\ref{line:OciorABAabaLESSt1} of Algorithm~\ref{algm:OciorABA} is satisfied, i.e.,   $|\ABAOneSet| < t+1$,  then each honest node eventually outputs  $\defaultvalue$ and terminates,   where $\ABAOneSet:=\{j:  \APBVAoutput[j] = 1, j \in[1:n]\}$.  If  $|\ABAOneSet| \geq t+1$,  each honest node eventually runs the steps in Lines~\ref{line:OciorABAabaGEQt1a}-\ref{line:OciorABAabaGEQt1b} of Algorithm~\ref{algm:OciorABA}  and then outputs a message and terminates.  It is worth noting that, based on Lemma~\ref{lm:OciorABAfastpropertyABBARBC},   $t+1$ instances of   $\{\RBC[\ltuple \IDABA, j \rtuple]\}_{j\in \ABAOneSetK}$ eventually deliver the same  outputs at  each honest node,  
where $\ABAOneSetK  \subseteq \ABAOneSet$ denotes the first $t+1$ smallest  values in   $\ABAOneSet$.   
 \end{proof}

 \begin{theorem}  [Validity]  \label{thm:OciorABAfastvalidity}
Given $n\geq 3t+1$, if  all  honest nodes input the same  value $\wv$, then each honest node eventually outputs $\wv$ in $\OciorABA$.   
\end{theorem}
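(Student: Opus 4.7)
The plan is to exploit the Validity property of $\APVA$ to show that the common vector $\APBVAoutput$ produced by $\APVA[\IDABA]$ must contain at least $t+1$ ones at indices of honest nodes, whose $\RBC$ instances moreover deliver the correct erasure-coded symbols of $\wv$; erasure-code decoding from any $t+1$ correct symbols then recovers $\wv$ exactly at every honest node.

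First I would observe that if all honest nodes input the same $\wv$, then in Line~\ref{line:OciorABAabaECncoding} they all compute the identical codeword $[y_1,\dots,y_n]=\ECEnc(n,t+1,\wv)$, so $y_j^{(\thisnodeindex)}=y_j$ at every honest Node~$\thisnodeindex$. By Validity of $\RBC[\langle \IDABA, j\rangle]$ for any honest leader $j$, every honest node eventually delivers $y_j^{(j)}=y_j$ from that instance, and the check in Line~\ref{line:abainputabbaconditionequal} sets $\APBVAinput_i[j]\gets 1$ at every honest $\thisnodeindex$ for every honest $j$; crucially, no honest node ever sets $\APBVAinput_i[j]\gets 0$ for an honest $j$. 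Since the at least $n-t$ honest indices form a common set of non-missing inputs across honest nodes, the Termination and Consistency properties of $\APVA$ imply that $\APVA[\IDABA]$ eventually outputs a common vector $\APBVAoutput$ to every honest node, triggering Line~\ref{line:OciorABAAPVAtrigger}.

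Next I would combine the Validity of $\APVA$ with a counting argument on $\NonmissingElementSet(\APBVAoutput)$. Validity forbids $\APBVAoutput[j]=0$ at any honest $j$ because no honest node input $0$ there, so for each honest $j$ either $\APBVAoutput[j]=1$ or $\APBVAoutput[j]=\missing$. Since $|\NonmissingElementSet(\APBVAoutput)|\geq n-t$ and at most $t$ dishonest indices can occupy non-missing slots, at least $n-2t\geq t+1$ honest indices satisfy $\APBVAoutput[j]=1$, so $|\ABAOneSet|\geq t+1$ and execution enters the decoding branch at Lines~\ref{line:OciorABAabaGEQt1a}--\ref{line:OciorABAabaGEQt1b}. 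For every $j\in\ABAOneSetK$, Validity of $\APVA$ guarantees that some honest node input $\APBVAinput_i[j]=1$, which by Line~\ref{line:abainputabbaconditionequal} means the value delivered from $\RBC[\langle \IDABA, j\rangle]$ equals the genuine codeword symbol $y_j=\ECEnc_j(n,t+1,\wv)$. By Consistency and Totality of $\RBC$, every honest node eventually delivers the same $y_j^{(j)}=y_j$ at Line~\ref{line:OciorABARBCoutputs}, so $\{y_j^{(j)}\}_{j\in\ABAOneSetK}$ consists of $t+1$ correct symbols of the codeword of $\wv$, and the erasure-code guarantee yields $\ECDec(n,t+1,\{y_j^{(j)}\}_{j\in\ABAOneSetK})=\wv$, which every honest node outputs before terminating.

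The main technical subtlety is the counting step: the adversary could in principle flood the non-missing slots of $\APBVAoutput$ with dishonest entries equal to $1$, and only by combining the observation that no honest node inputs $0$ at an honest index $j$ (ruling out $\APBVAoutput[j]=0$ there via $\APVA$ Validity) with the lower bound $|\NonmissingElementSet(\APBVAoutput)|\geq n-t$ do we force at least $n-2t\geq t+1$ honest ones into $\APBVAoutput$; this is the step that really uses the optimal resilience $n\geq 3t+1$.
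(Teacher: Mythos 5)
Your proposal is correct and follows essentially the same route as the paper's proof: it uses the Termination/Consistency of $\APVA$ to obtain a common output vector, the Validity of $\APVA$ together with the observation that no honest node inputs $0$ at an honest index to force at least $n-2t\geq t+1$ ones into $\APBVAoutput$, and then the Validity of $\APVA$ again to certify that each symbol $y_j^{(j)}$ for $j\in\ABAOneSetK$ matches the honest encoding of $\wv$, so erasure decoding returns $\wv$. The counting step you flag as the key subtlety is exactly the step the paper isolates in its Lemma~\ref{lm:OciorABAfastvalidityHone} and the surrounding argument.
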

\begin{proof} 
 From Lemma~\ref{lm:OciorABAsameAPVAout},     in $\OciorABA$  every  honest node eventually outputs the same vector $\APBVAoutput$ from $\APVA[\IDABA]$, for some $\APBVAoutput$.   
 By combining the results of Lemma~\ref{lm:OciorABAsameAPVAout},  Lemma~\ref{lm:APVAAgreement} (Consistency property of $\APVA$),  and Lemma~\ref{lm:APVAValidity} (Validity property of $\APVA$),  in $\OciorABA$ every honest node eventually  outputs the same vector $\APBVAoutput$ from $\APVA[\IDABA]$, such that the number of  non-missing  elements of  the output $\APBVAoutput$ is greater than or equal to  $n-t$, i.e.,   $|\NonmissingElementSet(\APBVAoutput)|\geq n-t$,  where $\NonmissingElementSet(\APBVAoutput):=\{j:  \APBVAoutput[j] \neq \missing, j \in[1:n]\}$.     
 The above conclusion, together with the conclusion from Lemma~\ref{lm:OciorABAfastvalidityHone}, reveals that  if  all  honest nodes input the same  value $\wv$, then  in $\OciorABA$ every honest node eventually  outputs the same vector $\APBVAoutput$ from $\APVA[\IDABA]$ such that $|\NonmissingElementSet(\APBVAoutput)|\geq n-t$ and that   $\APBVAoutput[j] =1$, $\forall j\in \NonmissingElementSet(\APBVAoutput)\setminus \Fc$.    Since $|\NonmissingElementSet(\APBVAoutput)\setminus \Fc| \geq n-2t \geq t+1$,  it further implies that if  all  honest nodes input the same  value $\wv$, then every honest node eventually runs the steps in Lines~\ref{line:OciorABAabaGEQt1a}-\ref{line:OciorABAabaGEQt1b} of Algorithm~\ref{algm:OciorABA} and outputs the decoded message.

Let  $\ABAOneSet:=\{j:  \APBVAoutput[j] = 1, j \in[1:n]\}$ and  let $\ABAOneSetK  \subseteq \ABAOneSet$ denote the first $t+1$ smallest  values in   $\ABAOneSet$.  
From the Validity property of $\APVA$ (see  Lemma~\ref{lm:APVAValidity}),  for any  $j\in \ABAOneSetK$,   at least one honest Node~$i$,   for $i\in [1:n]\setminus \Fc$, must have input $\APBVAinput_i[j] =\APBVAoutput[j] =1$, which also  implies that $y_j^{(j)} = y_j^{(i)}$   (see Line~\ref{line:abainputabbaconditionequal}  of Algorithm~\ref{algm:OciorABA}).  Thus, if  all  honest nodes input the same  value $\wv$, then it holds true that $y_j^{(j)}=\ECEnc_j(\networksizen,   t +1 , \wv), \forall j \in \ABAOneSetK$, where $\ECEnc_j(\networksizen,   t +1 , \wv)$ denotes the  $j$th  symbol encoded from $\wv$.  
Therefore,  if  all  honest nodes input the same  value $\wv$, then every honest node eventually runs the steps in Lines~\ref{line:OciorABAabaGEQt1a}-\ref{line:OciorABAabaGEQt1b} and outputs the same decoded message $\wv$ (see  Line~\ref{line:OciorABAabaECdec} of Algorithm~\ref{algm:OciorABA}), where $\wv=\ECDec(n, t+1, \{\ECEnc_j(\networksizen,   t +1 , \wv)\}_{j\in \ABAOneSetK})$. 
\end{proof}

\begin{theorem}  [Consistency]   \label{thm:OciorABAfastconsistency}
Given $n\geq 3t+1$,  if any honest node outputs a value $\wv$, then every honest node eventually outputs $\wv$   in $\OciorABA$.
\end{theorem}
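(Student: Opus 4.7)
The plan is to mirror the structure of the Consistency proof of $\OciorABAstar$ (Theorem~\ref{thm:OciorABAconsistency}), replacing the per-instance $\ABBA$ agreement arguments with a single agreement on the $\APVA$ output vector. First I would invoke Lemma~\ref{lm:OciorABAsameAPVAout}, which guarantees that in $\OciorABA$ every honest node eventually outputs the same vector $\APBVAoutput$ from $\APVA[\IDABA]$. This common output is the linchpin: once it is in hand, all honest nodes compute exactly the same set $\ABAOneSet := \{j : \APBVAoutput[j] = 1, j \in [1:n]\}$ and hence the same subset $\ABAOneSetK$ of its $t+1$ smallest indices in Line~\ref{line:OciorABAabaGEQt1a} of Algorithm~\ref{algm:OciorABA}.

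Next I would split the argument by the size of $\ABAOneSet$. In the case $|\ABAOneSet| < t+1$, the condition in Line~\ref{line:OciorABAabaLESSt1} is simultaneously triggered at every honest node, so every honest node outputs $\defaultvalue$ and terminates, and consistency is immediate. In the case $|\ABAOneSet| \geq t+1$, every honest node enters the branch in Lines~\ref{line:OciorABAabaGEQt1a}--\ref{line:OciorABAabaGEQt1b} with the same index set $\ABAOneSetK$ and waits for the delivery of $y_j^{(j)}$ from $\RBC[\ltuple \IDABA, j \rtuple]$ for each $j \in \ABAOneSetK$.

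The remaining task is to ensure those $t+1$ RBC outputs actually arrive and are identical across honest nodes. This is precisely the content of Lemma~\ref{lm:OciorABAfastpropertyABBARBC}: for every $j \in \ABAOneSetK$, the instance $\RBC[\ltuple \IDABA, j \rtuple]$ eventually delivers the same value $y_j^{(j)}$ at every honest node (which in turn rests on the Totality and Consistency properties of the underlying $\OciorRBC$ protocol, combined with the fact that $\APBVAoutput[j]=1$ forces some honest node to have already accepted $y_j^{(j)}$ from the RBC). Consequently, the wait in Line~\ref{line:OciorABARBCoutputs} terminates at every honest node with an identical input set $\{y_j^{(j)}\}_{j\in \ABAOneSetK}$, and the deterministic decoding in Line~\ref{line:OciorABAabaECdec} produces the same $\hat{\Me}$ at every honest node.

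The main obstacle I anticipate is not conceptual but bookkeeping: one must carefully argue that no honest node can slip into a different branch of the \textbf{if}--\textbf{else} than the others, which requires using not merely Lemma~\ref{lm:OciorABAsameAPVAout} (agreement on the vector) but also the determinism of the procedure from $\APBVAoutput$ onward. Once the common-$\APBVAoutput$ fact and Lemma~\ref{lm:OciorABAfastpropertyABBARBC} are in place, consistency follows in both branches essentially verbatim from the $\OciorABAstar$ argument, so the proof should be short and parallel to that of Theorem~\ref{thm:OciorABAconsistency}.
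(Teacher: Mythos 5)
Your proposal is correct and follows essentially the same route as the paper's own proof: it rests on Lemma~\ref{lm:OciorABAsameAPVAout} for agreement on the common $\APVA$ output vector $\APBVAoutput$ (hence on $\ABAOneSet$ and $\ABAOneSetK$), on Lemma~\ref{lm:OciorABAfastpropertyABBARBC} for identical $\RBC$ deliveries of $y_j^{(j)}$ for $j\in\ABAOneSetK$, and on the determinism of the branch selection and of the erasure-code decoding. The only cosmetic difference is that the paper also cites Theorem~\ref{thm:OciorABAfastterminate} up front for termination, which your argument implicitly covers.
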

\begin{proof} 

From Theorem~\ref{thm:OciorABAfastterminate}, every  honest node  eventually  outputs a  message and terminate   in $\OciorABA$.  
Due to the Consistency property of $\APVA[\IDABA]$, every honest node eventually outputs the  same vector $\APBVAoutput$  from $\APVA[\IDABA]$    (see Lemma~\ref{lm:OciorABAsameAPVAout}).
Due to the Consistency and Totality properties of $\RBC$,  every honest node eventually outputs  the same  coded symbol $y_j^{(j)}$ from   $\RBC[\ltuple \IDABA, j \rtuple]$, $\forall j\in \ABAOneSetK$ (see Lemma~\ref{lm:OciorABAfastpropertyABBARBC}),    where $\ABAOneSetK$ denotes the first $t+1$ smallest  values in   $\ABAOneSet:=\{j:  \APBVAoutput[j] = 1, j \in[1:n]\}$. 
If the condition in  Line~\ref{line:OciorABAabaLESSt1} of Algorithm~\ref{algm:OciorABA} is satisfied, i.e.,   $|\ABAOneSet| < t+1$,  then each honest node eventually outputs  $\defaultvalue$ and terminates. 
If  $|\ABAOneSet| \geq t+1$,  each honest node eventually runs the steps in Lines~\ref{line:OciorABAabaGEQt1a}-\ref{line:OciorABAabaGEQt1b} of Algorithm~\ref{algm:OciorABA}  and  outputs the same decoded message $\wv$ (see  Line~\ref{line:OciorABAabaECdec} of Algorithm~\ref{algm:OciorABA}), where $\wv=\ECDec(n, t+1, \{y_j^{(j)}\}_{j\in \ABAOneSetK})$.  
\end{proof}

\begin{theorem}  [Communication and Round Complexities]  \label{thm:OciorABAComplexites}
The expected communication complexity of  $\OciorABA$ is $O(n\ell + n^3 \log \alphabetsize )$ bits,  while the expected round complexity of  $\OciorABA$ is $O(1)$ rounds, given $n\geq 3t+1$.       Here $\alphabetsize$ denotes the alphabet size of  the error correction code used in the proposed  protocol.  
\end{theorem}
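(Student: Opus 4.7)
The plan is to decompose the total cost of $\OciorABA$ into three contributions---(i) the $n$ parallel invocations of $\RBC$ that broadcast the erasure-coded symbols, (ii) the single invocation of $\APVA[\IDABA]$, and (iii) the local erasure-code encoding and decoding---and to bound each piece separately. The round complexity will follow from the same decomposition since each block has $O(1)$ expected rounds.

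For the $\RBC$ layer, I would first observe that the erasure code in Line~\ref{line:OciorABAabaECncoding} splits the $\ell$-bit input into $n$ symbols each of length $O(\ell/n + \log\alphabetsize)$ bits, because the decoder only needs any $t+1$ of them to reconstruct $\wv$ (see Line~\ref{line:OciorABAabaECdec}) and the code operates over an alphabet of size $\alphabetsize$. Hence each $\RBC[\ltuple \IDABA, j\rtuple]$ broadcasts a message of size $O(\ell/n + \log\alphabetsize)$ bits. Invoking the $\OciorRBC$ protocol of~\cite{ChenOciorCOOL:24}, whose cost for broadcasting an $L$-bit message is $O(nL + n^2\log\alphabetsize)$ bits in $O(1)$ rounds, the aggregate cost over the $n$ parallel reliable broadcasts becomes
\[
n\cdot O\!\left(n\cdot(\ell/n + \log\alphabetsize) + n^2\log\alphabetsize\right) \;=\; O(n\ell + n^3\log\alphabetsize)
\]
bits, and still $O(1)$ rounds since the $n$ instances run in parallel.

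For the $\APVA$ layer, I would simply invoke the complexity bound claimed in the introduction and established in the analysis of Algorithm~\ref{algm:APVA}: one execution of $\APVA[\IDABA]$ costs $O(n^3\log\alphabetsize)$ bits in $O(1)$ expected rounds. The $n$ bits exchanged with $\APVA$ as inputs (one bit per index, see Line~\ref{line:OciorABAabainputabba}) are absorbed into this bound. Local steps---the erasure-code encoding in Line~\ref{line:OciorABAabaECncoding}, the comparisons in Line~\ref{line:abainputabbaconditionequal}, and the decoding in Line~\ref{line:OciorABAabaECdec}---incur no communication, and the wait in Line~\ref{line:OciorABARBCoutputs} only reuses outputs already paid for in the $\RBC$ layer.

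Summing the two contributions yields $O(n\ell + n^3\log\alphabetsize)$ expected bits; taking the max of the two $O(1)$-round blocks (which execute essentially in parallel, with the decoding step adding one additional round) yields $O(1)$ expected rounds. The main obstacle I anticipate is accounting correctly for the symbol size produced by the erasure encoder and matching it to the per-message cost of $\OciorRBC$ so that the $n\ell$ term comes out with a single factor of $n$ and not $n^2$; once the symbol length $O(\ell/n + \log\alphabetsize)$ is identified, the rest of the calculation is routine, and the round bound follows immediately from the $O(1)$-round guarantees of $\OciorRBC$ and $\APVA$.
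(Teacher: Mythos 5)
Your decomposition and accounting match the paper's own proof: the paper likewise bounds each coded symbol at $\symbolsize=\lceil\max\{\ell/(t+1),\log \alphabetsize^{\star}\}\rceil$ bits, charges each of the $n$ reliable broadcasts $O(n\symbolsize+nt\log \alphabetsize)$ bits via $\OciorRBC$, adds the $\APVA$ cost, and concludes with $t=O(n)$. The only substantive difference is that you black-box the $O(n^3\log \alphabetsize)$ bound for $\APVA$, whereas the paper's proof of this very theorem is where that bound is actually justified (the dominant term being the $n$ internal $\RBC[\ltuple \IDABANew, j \rtuple]$ instances broadcasting the $O(n)$-bit vectors $\ConfirmRecord_{j}$, costing $O(n^3+n^2t\log \alphabetsize)$ bits, with all remaining $\ACIDd$/$\ABBBA$/$\ABBA$/election traffic at expected $O(n^3)$ bits), so you would need to add that short argument rather than cite it.
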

\begin{proof}
In $\OciorABA$,  each coded symbol  $y_j^{(\thisnodeindex)}$  (see Line~\ref{line:OciorABAabaECncoding} of Algorithm~\ref{algm:OciorABA}) carries $\symbolsize= \lceil \max\{\ell/k, \log \alphabetsize^{\star}\} \rceil$ bits, for $j, \thisnodeindex\in [1:n]$,  where $\alphabetsize^{\star}$ is the alphabet size of the erasure code used in the protocol, and $k=t+1$. 
By calling the $\OciorRBC$ protocol in \cite{ChenOciorCOOL:24}, the communication cost of each  $\RBC$ instance  for a coded symbol,  i.e., $\RBC[\ltuple \IDABA, j \rtuple]$ for $j\in[1:n]$,   is  $O(n\symbolsize + n t \log \alphabetsize)$ bits, where $\alphabetsize$ is the alphabet size of the error correction code used in the $\OciorRBC$ protocol.
 Therefore, the  communication cost of $n$  $\RBC$ instances is   $O(n^2\symbolsize + n^2 t \log \alphabetsize) = O(n\ell   + n^2 \log \alphabetsize^{\star} + n^2 t \log \alphabetsize) $ bits. 
 
 The  expected communication cost of the $\APVA$ protocol is   $O(n^3 + n^2 t \log \alphabetsize)$ bits. In  the $\APVA$   protocol,  the  communication cost of $n$  $\RBC$ instances,  	 i.e.,   $\RBC[\ltuple \IDABANew, j \rtuple]$  for  $\ConfirmRecord_{j}, \forall j \in[1:n]$ (see Line~\ref{line:APVARBCinputthisnode} of Algorithm~\ref{algm:APVA}), is $O(n^3 + n^2 t \log \alphabetsize)$ bits, where   $|\ConfirmRecord_{j}|=O(n), \forall j \in[1:n]$.  The other communication cost in $\APVA$ is expected $O(n^3)$ bits. 
  
 Therefore, the expected total    communication complexity of  $\OciorABA$ is $O(n\ell + n^3 \log \alphabetsize )$ bits,  considering $t=O(n)$.  
The expected round complexity of $\OciorABA$ is $O(1)$ rounds.  
\end{proof}

  \begin{lemma}     \label{lm:OciorABAsameAPVAout}
In $\OciorABA$, and given $n\geq 3t+1$, every  honest node eventually outputs the same vector $\APBVAoutput$ from $\APVA[\IDABA]$, for some $\APBVAoutput$. 
\end{lemma}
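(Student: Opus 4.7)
The plan is to reduce the claim to the Consistency and Termination properties of $\APVA$ stated in Definition~\ref{def:APVA}. By Consistency, it suffices to show that at least one honest node eventually outputs some vector $\APBVAoutput$ from $\APVA[\IDABA]$; once this is established, Consistency forces every other honest node to eventually output the same $\APBVAoutput$. So the whole task reduces to verifying that the Termination hypothesis of $\APVA$ is eventually satisfied in $\OciorABA$.

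To verify the Termination hypothesis, I would argue that every honest node eventually feeds a non-missing binary value into $\APVA[\IDABA]$ for at least $n-t$ common positions. Concretely, let $\Hc \subseteq [1:n]\setminus\Fc$ denote the set of honest node indices; since $n\geq 3t+1$, we have $|\Hc|\geq n-t$. For each $j\in \Hc$, the honest leader $\Node_j$ inputs $y_j^{(j)}$ to $\RBC[\ltuple \IDABA, j\rtuple]$ after executing the encoding at Line~\ref{line:OciorABAabaECncoding}. By the Validity and Totality properties of $\RBC$, every honest node eventually delivers $y_j^{(j)}$ from $\RBC[\ltuple \IDABA, j\rtuple]$ for every $j\in \Hc$. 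Once this delivery happens at $\Node_i$, the code at Lines~\ref{line:OciorABAabainputabbacondition}--\ref{line:OciorABAabainputabba} of Algorithm~\ref{algm:OciorABA} sets $\APBVAinput_i[j]\in\{0,1\}$ and passes it into $\APVA[\IDABA]$. Hence every honest node eventually supplies non-missing inputs at all positions in the common set $\Hc$, which has size at least $n-t$.

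This exactly matches the Termination hypothesis of $\APVA$ in Definition~\ref{def:APVA}, so by Termination every honest node eventually outputs some vector from $\APVA[\IDABA]$. Combining this with the Consistency property of $\APVA$ completes the argument: all honest nodes output the same vector $\APBVAoutput$.

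The main obstacle I anticipate is making sure the ``common non-missing positions'' argument is airtight: the Termination clause requires the $n-t$ positions to be common across honest nodes, not merely that each honest node has $n-t$ non-missing positions individually. The above argument handles this cleanly by using the honest-leader subset $\Hc$, whose RBC instances are guaranteed to terminate at all honest receivers by $\RBC$ Validity and Totality; the Byzantine leaders' RBC instances may or may not terminate, but they are not needed for the hypothesis.
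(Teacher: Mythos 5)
Your proposal is correct and follows essentially the same route as the paper's proof: invoke $\RBC$ Validity for the honest-leader instances to show all honest nodes eventually supply non-missing inputs at $n-t$ common positions, conclude via the Termination property of $\APVA$ (Lemma~\ref{lm:APVAtermination}) that some honest node outputs a vector, and finish with the Consistency property of $\APVA$ (Lemma~\ref{lm:APVAAgreement}). Your explicit identification of the common set as the honest-leader index set $\Hc$ is a slightly cleaner articulation of a point the paper states more tersely, but the decomposition and key lemmas are identical.
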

\begin{proof}
At first we will argue that, in $\OciorABA$,  at least one honest node  eventually outputs a   vector $\APBVAoutput$ from $\APVA[\IDABA]$.   
Let us assume that no honest node has output a vector  from $\APVA[\IDABA]$ yet.  Under this assumption,   at least $n-t$ instances of  $\RBC$  eventually deliver outputs at each honest node, resulting from the Validity property of $\RBC$.   
Then, all honest nodes eventually  input non-missing values in their input vectors for at least $n-t$ common positions   in the protocol $\APVA[\IDABA]$ (see Lines~\ref{line:OciorABAabainputabbacondition}-\ref{line:OciorABAabainputabba} of Algorithm~\ref{algm:OciorABA}), which implies that eventually every node   outputs a vector from $\APVA[\IDABA]$ (see Lemma~\ref{lm:APVAtermination}). 
From the Consistency property of $\APVA$ (Lemma~\ref{lm:APVAAgreement}),  if  any honest node outputs a vector $\APBVAoutput$  from $\APVA[\IDABA]$, then every honest node eventually outputs the same $\APBVAoutput$  from $\APVA[\IDABA]$,  for some $\APBVAoutput$. 
\end{proof}

 \begin{lemma}     \label{lm:OciorABAfastvalidityHone}
Given $n\geq 3t+1$, if  all  honest nodes input the same  value $\wv$ into $\OciorABA$, and an honest node outputs $\APBVAoutput$ from $\APVA[\IDABA]$, then   $\APBVAoutput[j] \neq  0$, $\forall j\in [1:n]\setminus \Fc$. 
\end{lemma}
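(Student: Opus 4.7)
The plan is to combine the determinism of $\ECEnc$ with the Validity property of $\RBC$ and the Validity property of $\APVA$, concluding by a short contradiction argument: if $\APBVAoutput[j]=0$ for some honest $j$, then by Validity of $\APVA$ some honest node must have passed $0$ into $\APVA[\IDABA]$ at position $j$, and I will show this cannot happen.

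First I would observe that, since $\ECEnc(n,t+1,\cdot)$ is deterministic and every honest node receives the same input $\wv$, all honest nodes compute identical encoding vectors in Line~\ref{line:OciorABAabaECncoding} of Algorithm~\ref{algm:OciorABA}; in particular, for every pair of honest nodes $i,i'$ and every $j\in[1:n]$, one has $y_j^{(i)}=y_j^{(i')}=\ECEnc_j(n,t+1,\wv)$. Next, for every honest index $j\in[1:n]\setminus\Fc$, the honest $\Node_j$ acts as the leader of $\RBC[\ltuple \IDABA,j\rtuple]$ with input $y_j^{(j)}$, so by the Validity property of $\RBC$ every honest $\Node_i$ eventually delivers exactly that value from this $\RBC$ instance.

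Once $\Node_i$ delivers $y_j^{(j)}$, it executes Line~\ref{line:abainputabbaconditionequal} of Algorithm~\ref{algm:OciorABA} and compares the delivered symbol against its locally computed $y_j^{(i)}$. By the determinism observation these two symbols are equal, so $\Node_i$ sets $\APBVAinput_i[j]\gets 1$ and passes $1$ (never $0$) into $\APVA[\IDABA]$ at position $j$. Hence no honest node ever inputs the value $0$ into $\APVA[\IDABA]$ at any honest position $j\in[1:n]\setminus\Fc$.

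Finally, I would invoke the Validity property of $\APVA$ (Definition~\ref{def:APVA}): whenever an honest node outputs $\APBVAoutput$ with $\APBVAoutput[j]\neq\missing$, at least one honest $\Node_i$ must have input $\APBVAinput_i[j]=\APBVAoutput[j]$. Combined with the previous paragraph, this rules out $\APBVAoutput[j]=0$ for every $j\in[1:n]\setminus\Fc$, leaving only $\APBVAoutput[j]\in\{1,\missing\}$, which is precisely the desired conclusion. I do not anticipate a real obstacle here; the only care point is to sequence the three invocations---determinism of $\ECEnc$, Validity of $\RBC$ for honest leaders, and Validity of $\APVA$---so that the argument relies exclusively on properties already established earlier in the paper.
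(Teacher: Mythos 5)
Your proposal is correct and follows essentially the same route as the paper's proof: establish that no honest node ever inputs $0$ into $\APVA[\IDABA]$ at an honest position $j$, then conclude via the Validity property of $\APVA$ (Lemma~\ref{lm:APVAValidity}). The only difference is that you spell out explicitly why the comparison at Line~\ref{line:abainputabbaconditionequal} succeeds (determinism of $\ECEnc$ plus $\RBC$ Validity for an honest leader), which the paper leaves implicit by simply citing that line.
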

\begin{proof}
If  all  honest nodes input the same  value $\wv$ into $\OciorABA$, then no honest node will input  $\APBVAinput_i[j] = 0$ into $\APVA[\IDABA]$, for any $i, j\in [1:n]\setminus \Fc$  (see Line~\ref{line:abainputabbaconditionequal} of Algorithm~\ref{algm:OciorABA}).  From this conclusion, in this case, if an honest node outputs $\APBVAoutput$ from $\APVA[\IDABA]$, then    $\APBVAoutput[j] \neq  0$, $\forall j\in [1:n]\setminus \Fc$, due to the Validity property of $\APVA$ (See Lemma~\ref{lm:APVAValidity}).   
\end{proof}

 \begin{lemma}  [Termination Property of $\APVA$]  \label{lm:APVAtermination}
In $\APVA$, if all honest nodes have input non-missing values in their input vectors for at least $n-t$  positions in common, then eventually every node will output a vector and terminate.
\end{lemma}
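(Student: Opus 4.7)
The plan is to trace termination through the nested sub-protocols of $\APVA$ in four stages. First, I would show that $\ACIDd[\IDMVBA]$ terminates at every honest node. By hypothesis there exists a common set $\Sc \subseteq [1:n]$ with $|\Sc| \geq n-t$ such that every honest node eventually inputs a non-missing binary value at each position $j \in \Sc$. For each such $j$, every honest node sends a $\VOTE$ message; since there are at least $n-t \geq 2t+1$ honest nodes, by pigeonhole on $\{0,1\}$ at least $t+1$ of them vote identically for some $b_j$, so every honest node eventually receives $t+1$ matching $\VOTE$ messages, amplifies, sets $\ReadyRecord^{(b_j)}[j]$, and broadcasts $\READY$. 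The $n-t$ $\READY$ messages then accumulate at every honest node, so $\FinishRecord^{(b_j)}[j]$ is set, $\FINISH$ is broadcast, and the $n-t$ $\FINISH$ messages in turn trigger the confirm step, making $\ConfirmRecord_{\thisnodeindex}[j] \neq \missing$ for every $j \in \Sc$. Hence $\ConfirmCount$ reaches $n-t$ at every honest node, and every honest node passes $\ConfirmRecord_{\thisnodeindex}$ into $\RBC[\ltuple \IDABANew, \thisnodeindex \rtuple]$.

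Next, I would argue that the outer $\RBC$ layer and $\CONFIRM$ exchange complete. By the Validity and Totality properties of $\RBC$, every honest node eventually delivers the broadcast of each honest leader; these outputs are fed back into $\ACIDd$, so $\RBCReadyindicator[j]$ and $\RBCFinishindicator[j]$ become $1$ for every honest $j$. Consequently every honest node eventually sends $\ELECTION$ at Line~\ref{line:ACIDdElectionSend} of Algorithm~\ref{algm:ACIDd}; the $n-t$ $\ELECTION$ messages trigger $\CONFIRM$; amplification on $t+1$ $\CONFIRM$ messages makes every honest node send $\CONFIRM$; and finally $2t+1$ $\CONFIRM$ messages arrive at every honest node, so $\ACIDd$ returns at all honest nodes. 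Once $\ACIDd$ has returned, each honest node enters the loop at Line~\ref{line:APVAABAround} of Algorithm~\ref{algm:APVA}: per round, $\Election$ terminates by its Termination property (at least $t+1$ honest activations), while $\ABBBA$ and $\ABBA$ terminate whenever their preconditions are met, so each iteration of the loop makes progress in finite time.

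The main obstacle, and hence the heart of the proof, is showing that the loop exits at Line~\ref{line:APVAout} in $O(1)$ rounds in expectation. The key observation is that each $\ELECTION$ message is sent only after the sender's own $\RBC[\ltuple \IDABANew, \cdot \rtuple]$ has completed, so by the time any honest node returns from $\ACIDd$ there are at least $n-2t$ honest indices $i^\star$ whose $\RBC$ will eventually deliver a vector $\ConfirmRecord_{i^\star}$ with $|\NonmissingElementSet(\ConfirmRecord_{i^\star})|\geq n-t$ whose entries are each certified by at least $n-t$ matching $\FINISH$ messages (so at least $t+1$ honest nodes have the corresponding $\FinishRecord$ bit set). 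For such a leader, Biased Validity of $\ABBBA$ at Line~\ref{line:APVAABBBA} forces $\ABBAoutputNew = 1$, and hence $\ABAoutputNew = 1$ at Line~\ref{line:APVAABBA}; after the wait at Line~\ref{line:APVARBCout}, each inner $\ABBBA$ at Line~\ref{line:APVAABBBAinput} also returns $1$ by Biased Validity, so $\ABBAoutput' = 1$ at Line~\ref{line:APVAABBBAoutput}, $\ABAoutput' = 1$ at Line~\ref{line:APVAABBA2}, and the node outputs $\ConfirmRecord_{i^\star}$ and terminates. By the Uniform and Unpredictability properties of the common coin, each round independently selects such an $i^\star$ with probability at least $(n-2t)/n = \Omega(1)$, so the loop exits in $O(1)$ rounds in expectation, giving termination at every honest node.
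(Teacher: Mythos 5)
Your proposal is correct and follows essentially the same route as the paper's proof: establish that $\ACIDd$ returns at every honest node via the $\VOTE$/$\READY$/$\FINISH$/confirm and $\ELECTION$/$\CONFIRM$ chains, identify the set of at least $n-2t$ honest leaders whose dispersal completed before any honest node returned, and show that electing such a leader forces all the $\ABBBA$/$\ABBA$ instances to output $1$ by Biased Validity, so every honest node outputs that leader's $\ConfirmRecord$ vector, with per-round success probability at least $(n-2t)/n$. Your write-up is somewhat more explicit on the pigeonhole step inside $\ACIDd$ and on the expected-round count, and somewhat terser than the paper on why a round with a badly elected leader cannot deadlock (the paper justifies this via the Consistency/Totality of the sub-protocols), but the decomposition and key ideas coincide.
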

\begin{proof}
At first we argue that,  in $\APVA$, if all honest nodes have input non-missing values in their input vectors for at least $n-t$  positions in common, then at least one  honest node eventually outputs   values $[ \ReadyRecord^{(1)}, \ReadyRecord^{(0)},  \FinishRecord^{(1)},  \FinishRecord^{(0)}, \RBCReadyindicator,  \RBCFinishindicator]$	  from    $\ACIDd[ \IDMVBA ]$.   
Let us assume that  each  honest Node~$i$, for $i\in [1:n]\setminus \Fc$,  has input a non-missing value $\APBVAinput_i[j]\neq \missing$  into   $\APVA[\IDABA]$,  $\forall j\in \CommonSet$, for  some set $\CommonSet \subseteq [1:n]$ such that  $|\CommonSet| \geq n-t$. 
Let us also assume that  no honest node has output values  from $\ACIDd[ \IDMVBA ]$ yet. 
In this case,  each honest Node~$i$, for $i\in [1:n]\setminus \Fc$, eventually sets the values of $\ReadyRecord^{(\binaryvalue_j)}[j] \gets 1$, $\FinishRecord^{(\binaryvalue_j)}[j] \gets 1$ and $\ConfirmRecord_{\thisnodeindex}[j] \gets \binaryvalue_j$  in Lines~\ref{line:ACIDdReady},  \ref{line:ACIDdFinish},  and \ref{line:ACIDdConfirm}  of Algorithm~\ref{algm:ACIDd},  for some $\binaryvalue_j\in \{0,1\}$, and for any $ j\in \CommonSet$. Then,   the condition   $\ConfirmCount= n-t$  in  Line~\ref{line:ACIDdcntCond}   of Algorithm~\ref{algm:ACIDd} is eventually satisfied at each honest node. Each honest Node~$i$ then eventually sets $\RBCReadyindicator   [j] \gets 1$, $\RBCFinishindicator   [j] \gets 1$,  and sends $(\ELECTION, \IDMVBA)$  in   Lines~\ref{line:ACIDdReadyRBC},  \ref{line:ACIDdFinishRBC},  and \ref{line:ACIDdElectionSend}  of Algorithm~\ref{algm:ACIDd},  for any $ j\in \CommonSet$.  In this case, at least  honest node eventually outputs   values	  from    $\ACIDd[ \IDMVBA ]$.  
If one honest node returns values from $\ACIDd[ \IDMVBA ]$, then  every honest node eventually returns values from $\ACIDd[ \IDMVBA ]$  (see Lines~\ref{line:ACIDdConfirmEleBegin} -\ref{line:ACIDdConfirmEleEnd}   of Algorithm~\ref{algm:ACIDd}).

Furthermore,  when one honest node outputs   values	  from    $\ACIDd[ \IDMVBA ]$,  there exists a set $\Ic^{\star}$ such that the following conditions hold: 1) $\Ic^{\star}\subseteq [1:n]\setminus \Fc$; 2)  $|\Ic^{\star}| \geq n-2t$; and  3) for any $i\in \Ic^{\star}$,  $\Node_i$ has completed the dispersal $\ACD[\ltuple \IDMVBA, i \rtuple]$, i.e.,  $\Node_i$ has sent $(\ELECTION, \IDMVBA)$ to  all nodes  (see Line~\ref{line:ACIDdElectionSend} of Algorithm~\ref{algm:ACIDd}).

Since every  honest node  eventually returns values from $\ACIDd[ \IDMVBA ]$, then  every honest node eventually activates the condition in  Line~\ref{line:APVAACIDoutput} of Algorithm~\ref{algm:APVA} and  runs the steps in Lines~\ref{line:APVAABAround}-\ref{line:APVAout}   of Algorithm~\ref{algm:APVA}.  
Let us assume that no honest node has output  a vector   from $\APVA$   before some round $\electionround$, and that $\Election[\ltuple \IDMVBA,  \electionround \rtuple] \to \electionoutput$  (see Line~\ref{line:APVAElection} of Algorithm~\ref{algm:APVA}), where $\electionoutput\in \Ic^{\star}$. We will argue that, if one honest node reaches Round $\electionround$,  it is guaranteed that every honest node eventually outputs a vector   from $\APVA$ and terminates at Round $\electionround$. It is worth noting that, if  one honest node reaches Round $\electionround$, then eventually every honest node will reach Round $\electionround$ before termination. This follows from the Consistency property of   the protocols $\Election[\ltuple \IDABA,  \electionround \rtuple]$,  $\ABBA[\ltuple \IDABANew,   \electionoutput \rtuple]$, and $\ABBA[\ltuple \IDABA,   \electionoutput \rtuple]$, as well as the  Consistency  and Totality properties of  the protocol $\RBC[\ltuple \IDABANew, \electionoutput \rtuple]$  (see Lines~\ref{line:APVAElection}, \ref{line:APVAABBA}, \ref{line:APVARBCout} and \ref{line:APVAABBA2}  of Algorithm~\ref{algm:APVA}). 

At Round $\electionround$, with $\Election[\ltuple \IDMVBA,  \electionround \rtuple] \to \electionoutput$ and $\electionoutput\in \Ic^{\star}$,  it holds true that $\Node_\electionoutput$ has sent $(\ELECTION, \IDMVBA)$ to  all nodes  (see Line~\ref{line:ACIDdElectionSend} of Algorithm~\ref{algm:ACIDd}).  This implies that at least $n-2t$ honest nodes have set $\RBCFinishindicator   [j] \gets 1$   (see Lines~\ref{line:ACIDdFinishRBC} and \ref{line:ACIDdElectionSendCond} of Algorithm~\ref{algm:ACIDd}). Thus, based on the Biased Validity property of  $\ABBBA$ (see Lemma~\ref{lm:OciorABBBA}), every honest node eventually outputs $1$ from $\ABBBA[\ltuple \IDABANew,   \electionoutput, 0\rtuple]$ (see Line~\ref{line:APVAABBBA} of Algorithm~\ref{algm:APVA}) and then eventually outputs $1$ from $\ABBA[\ltuple \IDABANew,  \electionoutput \rtuple]$ (see Line~\ref{line:APVAABBA} of Algorithm~\ref{algm:APVA}). 
Then, at this round every honest node eventually receives  the same vector $\ConfirmRecord_{\electionoutput}$	         from  $\RBC[\ltuple \IDABANew, \electionoutput \rtuple]$ (see Line~\ref{line:APVARBCout} of Algorithm~\ref{algm:APVA}).  Since $\electionoutput\in \Ic^{\star}$, it means that at least $n-2t$ honest nodes have set $ \FinishRecord^{(\ConfirmRecord_{\electionoutput}[j])}[j]      \gets 1$, $\forall  j \inset \NonmissingElementSet(\ConfirmRecord_{\electionoutput}) $    (see Lines~\ref{line:ACIDdFinish} and \ref{line:ACIDdConfirmCond} of Algorithm~\ref{algm:ACIDd}), where $\NonmissingElementSet(\ConfirmRecord_{\electionoutput}):=\{j:  \ConfirmRecord_{\electionoutput}[j] \neq \missing, j \in[1:n]\}$.    
Thus, based on the Biased Validity property of  $\ABBBA$, every honest node eventually outputs $1$ from $\ABBBA[\ltuple \IDMVBA,   \electionoutput, j\rtuple]$ (see Line~\ref{line:APVAABBBAinput} of Algorithm~\ref{algm:APVA}), $\forall  j \inset \NonmissingElementSet(\ConfirmRecord_{\electionoutput})$,  and then eventually outputs $1$ from $\ABBA[\ltuple \IDMVBA,  \electionoutput\rtuple]$ (see Line~\ref{line:APVAABBA2} of Algorithm~\ref{algm:APVA}).     
Since $\electionoutput\in \Ic^{\star}$, it is also guaranteed that $|\NonmissingElementSet(\ConfirmRecord_{\electionoutput})| \geq n-t$  (see Line~\ref{line:ACIDdcntCond} of Algorithm~\ref{algm:ACIDd}).  Therefore,  at Round $\electionround$,   every node eventually outputs the same vector $\ConfirmRecord_{\electionoutput}$ and terminates.  
 \end{proof}

 \begin{lemma}  [Consistency Property of $\APVA$]  \label{lm:APVAAgreement}
In $\APVA$,  if  any honest node outputs a vector $\wv$, then every honest node eventually outputs the same vector $\wv$, for some $\wv$.
\end{lemma}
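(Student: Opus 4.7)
The plan is to reduce the claim to the consistency properties of the building blocks ($\Election$, $\ABBA$, $\RBC$) by identifying a canonical ``decision round.'' Suppose some honest node outputs a vector. Let $\electionround^{\star}$ denote the smallest round index at which any honest node outputs in Line~\ref{line:APVAout} of Algorithm~\ref{algm:APVA}, and let $\Node_{i^{\star}}$ be such a node. By definition, $\Node_{i^{\star}}$ has $\ABAoutput' \eqlog 1$ from $\ABBA[\ltuple \IDMVBA, \electionoutput^{\star}\rtuple]$ at round $\electionround^{\star}$, where $\electionoutput^{\star}\gets \Election[\ltuple \IDMVBA, \electionround^{\star}\rtuple]$, and its output equals $\ConfirmRecord_{\electionoutput^{\star}}$ delivered from $\RBC[\ltuple \IDABANew, \electionoutput^{\star}\rtuple]$. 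I would like to show that every honest node eventually reaches round $\electionround^{\star}$ and outputs this same vector.

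First I would show that every honest node eventually enters round $\electionround^{\star}$. For any round $\electionround < \electionround^{\star}$, by the minimality of $\electionround^{\star}$ no honest node outputs at $\electionround$; hence for any honest node that reaches the inner check in Line~\ref{line:APVAoutCondition} at round $\electionround$, the value $\ABAoutput'$ must be $0$ (otherwise, combined with Consistency of $\ABBA[\ltuple \IDMVBA, \electionround_{\textit{coin}}\rtuple]$, some honest node would output at $\electionround < \electionround^{\star}$, a contradiction). The only other way to get stuck in round $\electionround$ is the wait in Line~\ref{line:APVARBCout}; but if $\ABAoutputNew \eqlog 1$ fires at one honest node, Consistency of $\ABBA[\ltuple \IDABANew, \electionoutput\rtuple]$ spreads it to all, and the integrity chain through $\ABBBA$ guarantees that at least one honest node had $\RBCReadyindicator[\electionoutput]\eqlog 1$ or $\RBCFinishindicator[\electionoutput]\eqlog 1$, forcing $\RBC[\ltuple \IDABANew, \electionoutput\rtuple]$ to have delivered at some honest node, so by the Totality and Consistency of $\RBC$ every honest node eventually obtains the same $\ConfirmRecord_{\electionoutput}$ and proceeds. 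Thus all honest nodes traverse rounds $1,\dots,\electionround^{\star}$ and participate in the sub-protocols at round $\electionround^{\star}$.

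Once every honest node reaches round $\electionround^{\star}$, Consistency of $\Election[\ltuple \IDMVBA,\electionround^{\star}\rtuple]$ yields the common value $\electionoutput^{\star}$ at every honest node. Consistency of the inner $\ABBA[\ltuple \IDABANew, \electionoutput^{\star}\rtuple]$ forces all honest nodes to also obtain $\ABAoutputNew \eqlog 1$, after which Consistency and Totality of $\RBC[\ltuple \IDABANew, \electionoutput^{\star}\rtuple]$ deliver the same vector $\ConfirmRecord_{\electionoutput^{\star}}$ to every honest node; the check $|\NonmissingElementSet(\ConfirmRecord_{\electionoutput^{\star}})|\geq n-t$ passes uniformly since the vector is identical. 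Consistency of the final $\ABBA[\ltuple \IDMVBA,\electionoutput^{\star}\rtuple]$ then propagates $\ABAoutput' \eqlog 1$ to every honest node, so each outputs $\ConfirmRecord_{\electionoutput^{\star}}$ at round $\electionround^{\star}$. The same Consistency of $\ABBA[\ltuple \IDMVBA,\electionoutput^{\star}\rtuple]$ also precludes any honest node from having moved past round $\electionround^{\star}$ with $\ABAoutput'\eqlog 0$; hence all honest outputs coincide with $\wv=\ConfirmRecord_{\electionoutput^{\star}}$. The main obstacle I expect is the second paragraph: carefully arguing that the wait in Line~\ref{line:APVARBCout} cannot deadlock any honest node before $\electionround^{\star}$, which requires chaining the integrity of $\ABBA$ and $\ABBBA$ back to the Totality of $\RBC[\ltuple \IDABANew,\electionoutput\rtuple]$ for every intermediate round.
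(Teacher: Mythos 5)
Your overall strategy is the same as the paper's: reduce consistency to the Consistency properties of $\Election$, $\ABBA$, $\ABBBA$, and the Consistency/Totality of $\RBC$, after pinning down a unique decision round. Your organization around a minimal decision round $\electionround^{\star}$, and your explicit treatment of why no honest node can deadlock at the wait in Line~\ref{line:APVARBCout} during rounds $\electionround<\electionround^{\star}$, is if anything more careful than the paper's write-up on that point.

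There is, however, one concrete step you skip that the paper's proof makes explicit and that your argument needs: you assert that ``all honest nodes traverse rounds $1,\dots,\electionround^{\star}$,'' but the round loop in Algorithm~\ref{algm:APVA} is only entered upon delivery of the output $[\ReadyRecord^{(1)},\ReadyRecord^{(0)},\FinishRecord^{(1)},\FinishRecord^{(0)},\RBCReadyindicator,\RBCFinishindicator]$ from $\ACIDd[\IDMVBA]$ (Line~\ref{line:APVAACIDoutput}). Nothing in your proposal establishes that every honest node eventually receives this output; an honest node that never returns from $\ACIDd$ never executes a single round and never outputs, regardless of how well the sub-protocols behave. The paper closes this by observing that an honest node that outputs from $\APVA$ must have returned from $\ACIDd$, and that the $\CONFIRM$-message amplification in Lines~\ref{line:ACIDdConfirmEleBegin}--\ref{line:ACIDdConfirmEleEnd} of Algorithm~\ref{algm:ACIDd} (a Bracha-style echo: $2t+1$ $\CONFIRM$s received implies at least $t+1$ honest senders, which causes all honest nodes to send $\CONFIRM$ and hence all to collect $2t+1$ of them) guarantees that if one honest node returns from $\ACIDd[\IDMVBA]$ then every honest node eventually does. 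You should insert this ``all-or-nothing return from $\ACIDd$'' step before the round-traversal argument; with it, the rest of your proof goes through as written.
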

\begin{proof}
 
In $\APVA$, if any two honest nodes output values at Rounds  $\electionround$ and $\electionround'$ (see Line~\ref{line:APVAABAround}   of Algorithm~\ref{algm:APVA}), respectively, then $\electionround=\electionround'$. This follows from  the Consistency property of   the protocols $\Election[\ltuple \IDABA,  \electionround \rtuple]$,  $\ABBA[\ltuple \IDABANew,   \electionoutput \rtuple]$, and $\ABBA[\ltuple \IDABA,   \electionoutput \rtuple]$, as well as the  Consistency  and Totality properties of  the protocol $\RBC[\ltuple \IDABANew, \electionoutput \rtuple]$  (see Lines~\ref{line:APVAElection}, \ref{line:APVAABBA}, \ref{line:APVARBCout} and \ref{line:APVAABBA2}  of Algorithm~\ref{algm:APVA}).  
Moreover,  at  the same round $\electionround$,  if any two honest nodes output $\wv'$ and $\wv''$, respectively, then  $\wv'=\wv''$, due to the Consistency property of the protocol   $\RBC[\ltuple \IDABANew, \electionoutput \rtuple]$  (see Lines~\ref{line:APVARBCout}  and \ref{line:APVAout}   of Algorithm~\ref{algm:APVA}).

On the other hand, if one  honest node outputs a vector from $\APVA$, then this node must have returned values $[ \ReadyRecord^{(1)}, \ReadyRecord^{(0)},  \FinishRecord^{(1)},  \FinishRecord^{(0)}, \RBCReadyindicator,  \RBCFinishindicator]$	  from    $\ACIDd[ \IDMVBA ]$   (see Lines~\ref{line:APVAACIDoutput}   and \ref{line:APVAout}   of Algorithm~\ref{algm:APVA}). 
Furthermore, if one honest node returns values from $\ACIDd[ \IDMVBA ]$, then  every honest node eventually returns values from $\ACIDd[ \IDMVBA ]$  (see Lines~\ref{line:ACIDdConfirmEleBegin} -\ref{line:ACIDdConfirmEleEnd}   of Algorithm~\ref{algm:ACIDd}). 
Therefore, if one  honest node outputs a vector $\wv$ from $\APVA$, then  every honest node eventually activates the condition in  Line~\ref{line:APVAACIDoutput} of Algorithm~\ref{algm:APVA} and  runs the steps in Lines~\ref{line:APVAABAround} -\ref{line:APVAout}   of Algorithm~\ref{algm:APVA}.  
Thus, if one  honest node outputs a vector $\wv$ from $\APVA$ at  some round $\electionround$, then  every honest node eventually outputs the same vector $\wv$ at the same   round $\electionround$, due to the  Consistency property of   the protocols $\Election[\ltuple \IDABA,  \electionround \rtuple]$,  $\ABBA[\ltuple \IDABANew,   \electionoutput \rtuple]$, and $\ABBA[\ltuple \IDABA,   \electionoutput \rtuple]$, as well as the  Consistency  and Totality properties of  the protocols $\RBC[\ltuple \IDABANew, \electionoutput \rtuple]$.  
 \end{proof}

  \begin{lemma}  [Validity Property of $\APVA$]  \label{lm:APVAValidity}
In $\APVA$, if an honest node outputs $\APBVAoutput$, then for any non-missing element of $\APBVAoutput$, i.e., $\APBVAoutput[j] \neq \missing$ for some $j\in [1:n]$,   at least one honest Node~$i$,   for $i\in [1:n]\setminus \Fc$, must have input $\APBVAinput_i[j] =\APBVAoutput[j]\neq \missing$.   Furthermore,   the number of  non-missing  elements of  the output $\APBVAoutput$ is greater than or equal to  $n-t$, i.e.,   $|\NonmissingElementSet(\APBVAoutput)|\geq n-t$,  where $\NonmissingElementSet(\APBVAoutput):=\{j:  \APBVAoutput[j] \neq \missing, j \in[1:n]\}$.     
\end{lemma}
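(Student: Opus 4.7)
The plan is to trace the output back through the chain of primitives: $\RBC[\ltuple \IDABANew, \electionoutput \rtuple] \to \ABBA[\ltuple \IDMVBA, \electionoutput\rtuple] \to \{\ABBBA[\ltuple \IDMVBA, \electionoutput, j\rtuple]\}_j \to \ACIDd[\IDMVBA]$, and argue that each guard forces the output to be rooted in an honest input. I would first observe that when an honest node outputs $\APBVAoutput$, it does so via Line~\ref{line:APVAout} of Algorithm~\ref{algm:APVA}, so there exists some round $\electionround$ with election outcome $\electionoutput$ such that $\APBVAoutput = \ConfirmRecord_{\electionoutput}$, where $\ConfirmRecord_{\electionoutput}$ was delivered from $\RBC[\ltuple \IDABANew, \electionoutput \rtuple]$.

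The size bound is immediate: Line~\ref{line:APVANNMissingThreshold} explicitly guards the output with the check $|\NonmissingElementSet(\ConfirmRecord_{\electionoutput})| \geq n-t$, so any vector that reaches the output step already satisfies $|\NonmissingElementSet(\APBVAoutput)| \geq n-t$.

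For the per-coordinate validity claim, I would chain the integrity properties downward. First, Line~\ref{line:APVAoutCondition} guarantees $\ABBA[\ltuple \IDMVBA, \electionoutput\rtuple]$ output $1$ at some honest node, so by the validity of binary $\ABBA$ at least one honest node must have input $\ABBAoutput' = 1$ at Line~\ref{line:APVAABBA2}, which by Line~\ref{line:APVAABBBAoutput} means every $\ABBBA[\ltuple \IDMVBA, \electionoutput, j\rtuple]$ for $j \in \NonmissingElementSet(\ConfirmRecord_{\electionoutput})$ output $1$ at that honest node. Invoking the Biased Integrity property of $\ABBBA$, for each such $j$ there exists an honest node whose $\ABBBA$ input contained a $1$, i.e.\ $\ReadyRecord^{(\ConfirmRecord_{\electionoutput}[j])}[j] = 1$ or $\FinishRecord^{(\ConfirmRecord_{\electionoutput}[j])}[j] = 1$ at an honest node. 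Finally, tracing Algorithm~\ref{algm:ACIDd}: setting either indicator to $1$ at an honest node requires at least $t+1$ $\VOTE$ messages for the pair $(\ConfirmRecord_{\electionoutput}[j], j)$ from distinct nodes (directly for $\ReadyRecord$, or indirectly via $n-t$ $\READY$ senders, of whom at least $n-2t \geq t+1$ are honest and each required their own $t+1$ $\VOTE$s). Since at most $t$ of any such $t+1$ senders can be dishonest, at least one honest Node~$i$ sent $\ltuple \VOTE, \IDABA, j, \ConfirmRecord_{\electionoutput}[j] \rtuple$ at Line~\ref{line:ACIDdVoteBegin}, which forces $\APBVAinput_i[j] = \ConfirmRecord_{\electionoutput}[j] = \APBVAoutput[j] \neq \missing$, as required.

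The main obstacle is the possibility that $\electionoutput$ is Byzantine, in which case the content of $\ConfirmRecord_{\electionoutput}$ broadcast via $\RBC[\ltuple \IDABANew, \electionoutput \rtuple]$ could be adversarially chosen, so its entries cannot be trusted on their face. The $\ABBBA$ step at Line~\ref{line:APVAABBBAinput} is precisely the device that rules this out: by re-voting on whether a threshold of honest $\ReadyRecord/\FinishRecord$ indicators actually exists for each alleged entry, it certifies that the purported value at coordinate $j$ is backed by the $\ACIDd$ machinery and therefore by an honest vote. The careful part of the write-up will be making the Biased Integrity invocation and the subsequent $\READY$/$\VOTE$ quorum tracing crisp, since this is where the otherwise-arbitrary leader content is forced to align with an honest input.
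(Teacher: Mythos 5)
Your proposal is correct and follows essentially the same route as the paper's proof: read the output off Line~\ref{line:APVAout}, get the size bound from the guard in Line~\ref{line:APVANNMissingThreshold}, trace $\ABAoutput'=1$ back through $\ABBA$ and the Biased Integrity property of $\ABBBA$ to an honest $\ReadyRecord/\FinishRecord$ indicator, and then back through the $\VOTE$/$\READY$ quorums of $\ACIDd$ to an honest input $\APBVAinput_i[j]$. Your write-up is in fact somewhat more explicit than the paper's in the final quorum-tracing step, which the paper dispatches with a line reference.
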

\begin{proof}
In $\APVA$, if an honest node outputs a vector $\APBVAoutput=\ConfirmRecord_{\electionoutput}$  (see Line~\ref{line:APVAout} of Algorithm~\ref{algm:APVA}) at some round $\electionround$ with $\Election[\ltuple \IDMVBA,  \electionround \rtuple] \to \electionoutput$  (see Line~\ref{line:APVAElection} of Algorithm~\ref{algm:APVA}),  then it holds true that 
$\ABBA[\ltuple \IDMVBA,  \electionoutput\rtuple]$  outputs   
$\ABAoutput'  =1$      (see Lines~\ref{line:APVAABBA2} and \ref{line:APVAoutCondition} of Algorithm~\ref{algm:APVA}), which further implies that at least one honest node outputs $\ABBAoutput_{\electionoutput, j} =1$ from $\ABBBA[\ltuple \IDMVBA,   \electionoutput, j  \rtuple]$, $\forall j\in \NonmissingElementSet(\ConfirmRecord_{\electionoutput})$   (see Lines~\ref{line:APVAABBBAinput} and  \ref{line:APVAABBBAoutput} of Algorithm~\ref{algm:APVA}). 
The instance of $\ABBAoutput_{\electionoutput, j} =1$ also reveals that at least one honest node inputs    $\ReadyRecord^{(\ConfirmRecord_{\electionoutput}[j])}[j]=1$ or $\FinishRecord^{(\ConfirmRecord_{\electionoutput}[j])}[j]=1$ into $\ABBBA[\ltuple \IDMVBA,   \electionoutput, j \rtuple]$,  based on the Biased Integrity  property of $\ABBBA$ (see Lemma~\ref{lm:OciorABBBA}).  
  When one honest node inputs    $\ReadyRecord^{(\ConfirmRecord_{\electionoutput}[j])}[j]=1$ or $\FinishRecord^{(\ConfirmRecord_{\electionoutput}[j])}[j]=1$, it is guaranteed that at least one honest Node~$i$ must have input  $\APBVAinput_i[j] =\ConfirmRecord_{\electionoutput}[j] \in  \{1,0\} $   (see Lines~\ref{line:ACIDdVoteBegin}-\ref{line:ACIDdFinishSend} of Algorithm~\ref{algm:ACIDd}).

 Furthermore,  if an honest node outputs a vector $\ConfirmRecord_{\electionoutput}$, it is guaranteed that  the number of  non-missing  elements of  the output $\ConfirmRecord_{\electionoutput}$ is greater than or equal to  $n-t$, i.e.,   $|\NonmissingElementSet(\ConfirmRecord_{\electionoutput})|\geq n-t$   (see Line~\ref{line:APVANNMissingThreshold}  of Algorithm~\ref{algm:APVA}). 
 \end{proof}

 \begin{lemma}  \label{lm:OciorABAfastpropertyABBARBC}
In $\OciorABA$, if  $\APVA[\IDABA]$  outputs $\APBVAoutput$  at an honest node and $|\ABAOneSet| \geq t+1$, with $\ABAOneSet:=\{j:  \APBVAoutput[j] = 1, j \in[1:n]\}$,   then $\RBC[\ltuple \IDABA, j \rtuple]$   eventually delivers the same output  $y_j^{(j)}$ at each honest node, for some $y_j^{(j)}$,   $\forall j\in \ABAOneSetK$, where $\ABAOneSetK$ denotes the first $t+1$ smallest  values in   $\ABAOneSet$. 		
\end{lemma}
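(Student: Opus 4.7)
The plan is to leverage the Validity property of $\APVA$ (Lemma~\ref{lm:APVAValidity}) together with the Totality and Consistency properties of the underlying $\RBC$ protocol, in a manner closely paralleling the proof of Lemma~\ref{lm:OciorABApropertyABBARBC} for $\OciorABAstar$. The key difference is that the binary per-index inputs in $\OciorABA$ are fed into a single $\APVA$ instance rather than $n$ independent $\ABBA$ instances, so I would replace the appeal to the binary $\ABBA$ properties in the earlier proof with an appeal to the $\APVA$ Validity property.

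First, fix any $j\in\ABAOneSetK\subseteq\ABAOneSet$. Since $\APVA[\IDABA]$ has output a vector $\APBVAoutput$ at some honest node with $\APBVAoutput[j]=1$, the Validity property of $\APVA$ (Lemma~\ref{lm:APVAValidity}) guarantees that $\APBVAoutput[j]\neq\missing$ was contributed by at least one honest node, i.e., there exists an honest Node~$i^\star$ which passed $\APBVAinput_{i^\star}[j]=1$ into $\APVA[\IDABA]$. By inspection of Lines~\ref{line:OciorABAabainputabbacondition}--\ref{line:OciorABAabainputabba} of Algorithm~\ref{algm:OciorABA}, this means that $\Node_{i^\star}$ must have previously received an output $y_j^{(j)}$ from $\RBC[\ltuple\IDABA,j\rtuple]$ and moreover that $y_j^{(j)}=y_j^{(i^\star)}$.

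Next, because $\Node_{i^\star}$ is honest and has delivered an output from $\RBC[\ltuple\IDABA,j\rtuple]$, the Totality property of $\RBC$ ensures that every honest node eventually delivers some output from $\RBC[\ltuple\IDABA,j\rtuple]$, and the Consistency property of $\RBC$ forces all such outputs to coincide with the single value $y_j^{(j)}$. Applying this argument to each $j\in\ABAOneSetK$ individually yields the claim.

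The main obstacle is largely routine: there is no probabilistic or game-theoretic subtlety, since all of the heavy lifting is already packaged inside Lemma~\ref{lm:APVAValidity} (Validity of $\APVA$) and the $\RBC$ guarantees. The only thing to be careful about is to cite Validity of $\APVA$ for the correct conclusion---namely that the honest contributor's input \emph{matches} $\APBVAoutput[j]=1$, not merely some non-missing value---which is exactly what Lemma~\ref{lm:APVAValidity} provides via the condition $\APBVAinput_i[j]=\APBVAoutput[j]\neq\missing$.
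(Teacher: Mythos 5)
Your proposal is correct and follows essentially the same route as the paper's own proof: invoke the Validity property of $\APVA$ (Lemma~\ref{lm:APVAValidity}) to obtain an honest node that input $1$ for index $j$, observe from the algorithm that this node must have delivered an output from $\RBC[\ltuple\IDABA,j\rtuple]$, and then apply the Totality and Consistency properties of $\RBC$. The extra observation that the delivered symbol equals the contributor's own encoded symbol is harmless but not needed for this lemma.
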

\begin{proof}
In $\OciorABA$, if  $\APVA[\IDABA]$  outputs $\APBVAoutput$  at an honest node, then for any $j\in \ABAOneSet$ with $\ABAOneSet:=\{j:  \APBVAoutput[j] = 1, j \in[1:n]\}$,   at least one honest Node~$i$,   for $i\in [1:n]\setminus \Fc$, must have input $\APBVAinput_i[j] =\APBVAoutput[j]=1$, based on the Validity property of $\APVA$ (see Lemma~\ref{lm:APVAValidity}).
The instance of $\APBVAinput_i[j] =1$ reveals that Node~$i$ has received  a coded symbol $y_j^{(j)}$ from $\RBC[\ltuple \IDABA, j \rtuple]$ (see Lines~\ref{line:OciorABAabainputabbacondition} and \ref{line:abainputabbaconditionequal} of Algorithm~\ref{algm:OciorABA}). 
Due to the Totality and Consistency properties of $\RBC$, if one  honest node outputs a symbol $y_j^{(j)}$, then every honest node  eventually outputs the same symbol $y_j^{(j)}$.  
 \end{proof}

 \begin{lemma} [Properties of $\ABBBA$]    \label{lm:OciorABBBA}
The $\ABBBA$  protocol (Algorithm~2 of \cite{ChenOciorMVBA:24}, restated in Algorithm~\ref{algm:ABBBA} here)  satisfies the properties of Conditional Termination,  Biased Validity,  and Biased Integrity.
\end{lemma}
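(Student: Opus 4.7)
The plan is to prove each of the three properties separately, all by direct inspection of the protocol in Algorithm~\ref{algm:ABBBA}, using only (i) the asynchronous delivery guarantee that every message sent between honest nodes is eventually received, and (ii) the bound that at most $t$ of the $n \geq 3t+1$ nodes are dishonest. Throughout, I will denote by $\ABBACountA, \ABBACountB, \ABBACountC$ the three counters maintained locally at the honest node in question.

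For \emph{Biased Integrity}, I would argue by tracing the two ways that an honest node can emit output $1$. First, the node may output $1$ on Line~\ref{line:ABBBAoutput1}, in which case its own input already satisfies $\abbainputA \eqlog 1$ or $\abbainputB \eqlog 1$, and the claim holds trivially. Otherwise, the node outputs $1$ on Line~\ref{line:ABBBAoneoutput}, which requires $\ABBACountA \geq t+1$ or $\ABBACountB \geq t+1$. Since each $\ABBAVALUE$ message is counted at most once per sender and at most $t$ senders are dishonest, at least one honest sender must have transmitted a message with $\abbainputA \eqlog 1$ or $\abbainputB \eqlog 1$, which is exactly its input.

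For \emph{Biased Validity}, I would suppose that at least $t+1$ honest nodes input $\abbainputB \eqlog 1$. Then at most $n-(t+1)$ nodes (the remaining honest nodes and the at most $t$ dishonest nodes) can ever send an $\ABBAVALUE$ message carrying $\abbainputB \eqlog 0$, so at every honest node $\ABBACountC \leq n - t - 1 < n - t$ forever. Hence the zero-termination branch on Line~\ref{line:zerocondition} is never enabled, and the only terminating branches remaining (Lines~\ref{line:ABBBAoutput1} and~\ref{line:ABBBAoneoutput}) both output $1$.

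For \emph{Conditional Termination}, I would split on whether any honest node has $\abbainputB \eqlog 1$. If none does, then all $\geq n-t$ honest nodes broadcast $(\ABBAVALUE, \cdot, \abbainputA, 0)$, so eventually $\ABBACountC \geq n-t$ at every honest node, triggering Line~\ref{line:zerocondition}; meanwhile, if some honest node happens to have $\abbainputA \eqlog 1$ it terminates at Line~\ref{line:ABBBAoutput1} without issue. If some honest node does have $\abbainputB \eqlog 1$, then the hypothesis on the input condition guarantees that at least $t+1$ honest nodes have $\abbainputA \eqlog 1$, so eventually $\ABBACountA \geq t+1$ at every honest node and Line~\ref{line:ABBBAoneoutputCond} fires. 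The main subtlety to flag is making sure the analysis treats the two ways of terminating with output~$1$ (the fast path on Line~\ref{line:ABBBAoutput1} versus the counter path on Line~\ref{line:ABBBAoneoutput}) uniformly; this is the only place where an honest node could appear to skip the wait, but since such a node still broadcasts its $\ABBAVALUE$ before terminating, the counters at other honest nodes continue to grow as required. No single step is truly difficult: the whole argument is essentially counting messages subject to the resilience bound $n \geq 3t+1$.
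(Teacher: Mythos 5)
Your proof is correct and follows essentially the same approach as the paper: direct inspection of Algorithm~\ref{algm:ABBBA} with counting arguments under the bound $n\geq 3t+1$. In fact you are slightly more careful than the paper's own proof, which for Biased Integrity only mentions the counter path of Line~\ref{line:ABBBAoneoutputCond} and omits the (trivial) fast-path case of Line~\ref{line:ABBBAoutput1} that you handle explicitly; the only minor imprecision on your side is asserting in the termination argument that Line~\ref{line:ABBBAoneoutputCond} ``fires,'' when a node could instead terminate first via Line~\ref{line:zerocondition} — which is harmless since termination is all that is claimed.
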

\begin{proof}
In the $\ABBBA$  protocol (see  Algorithm~\ref{algm:ABBBA}),  each  honest node inputs a pair of binary numbers $(\abbainputA, \abbainputB)$, for some $\abbainputA, \abbainputB \in \{0,1\}$. The  honest nodes seek to reach an agreement on a common value $\abbainput \in \{0,1\}$.
 This $\ABBBA$ protocol   satisfies the following properties:
\begin{itemize}
\item   \emph{Conditional Termination:} Under an input condition---i.e.,  if one honest node inputs its second number as $\abbainputB =1$ then at least $t+1$ honest nodes  input  their first numbers as $\abbainputA =1$---every honest node eventually outputs a value and terminates.  Under the above input condition, every honest node eventually outputs a value and terminates  in Lines~\ref{line:ABBBAoutput1},  \ref{line:ABBBAoneoutput} or \ref{line:ABBBAzero} of   Algorithm~\ref{algm:ABBBA}. 
\item   \emph{Biased Validity:} If at least $t+1$ honest nodes input the second number as $\abbainputB=1$, then any honest node that terminates outputs $1$.       
 If at least $\networkfaultsizet+1$ honest nodes input $\abbainputB=1$, then no honest node will output $0$. This is because  at most $\networksizen-(\networkfaultsizet+1)$ nodes input $\abbainputB=0$ in this case, indicating that the condition in Line~\ref{line:zerocondition} of  Algorithm~\ref{algm:ABBBA} cannot be satisfied.  
\item   \emph{Biased Integrity:} If any honest node outputs $1$, then at least one honest node inputs $\abbainputA=1$ or $\abbainputB=1$.        If one honest node outputs $1$, then the condition in Line~\ref{line:ABBBAoneoutputCond}  of  Algorithm~\ref{algm:ABBBA} is satisfied, which indicates that  at least one honest node has an input of  $\abbainputA =1$ or  $\abbainputB =1$.            
\end{itemize}  
 \end{proof}


\begin{thebibliography}{10}
\providecommand{\url}[1]{#1}
\csname url@samestyle\endcsname
\providecommand{\newblock}{\relax}
\providecommand{\bibinfo}[2]{#2}
\providecommand{\BIBentrySTDinterwordspacing}{\spaceskip=0pt\relax}
\providecommand{\BIBentryALTinterwordstretchfactor}{4}
\providecommand{\BIBentryALTinterwordspacing}{\spaceskip=\fontdimen2\font plus
\BIBentryALTinterwordstretchfactor\fontdimen3\font minus
  \fontdimen4\font\relax}
\providecommand{\BIBforeignlanguage}[2]{{%
\expandafter\ifx\csname l@#1\endcsname\relax
\typeout{** WARNING: IEEEtran.bst: No hyphenation pattern has been}%
\typeout{** loaded for the language `#1'. Using the pattern for}%
\typeout{** the default language instead.}%
\else
\language=\csname l@#1\endcsname
\fi
#2}}
\providecommand{\BIBdecl}{\relax}
\BIBdecl

\bibitem{PSL:80}
M.~Pease, R.~Shostak, and L.~Lamport, ``Reaching agreement in the presence of
  faults,'' \emph{Journal of the ACM}, vol.~27, no.~2, pp. 228--234, Apr. 1980.

\bibitem{LV:11}
G.~Liang and N.~Vaidya, ``Error-free multi-valued consensus with {Byzantine}
  failures,'' in \emph{Proceedings of the ACM Symposium on Principles of
  Distributed Computing (PODC)}, Jun. 2011, pp. 11--20.

\bibitem{GP:20}
C.~Ganesh and A.~Patra, ``Optimal extension protocols for {Byzantine} broadcast
  and agreement,'' in \emph{Distributed Computing}, Jul. 2020.

\bibitem{LDK:20}
A.~Loveless, R.~Dreslinski, and B.~Kasikci, ``Optimal and error-free
  multi-valued {Byzantine} consensus through parallel execution,'' 2020,
  available on: https://eprint.iacr.org/2020/322.

\bibitem{NRSVX:20}
K.~Nayak, L.~Ren, E.~Shi, N.~Vaidya, and Z.~Xiang, ``Improved extension
  protocols for {Byzantine} broadcast and agreement,'' in \emph{International
  Symposium on Distributed Computing (DISC)}, Oct. 2020.

\bibitem{Chen:2020arxiv}
J.~Chen, ``Fundamental limits of {Byzantine} agreement,'' 2020, available on
  ArXiv: https://arxiv.org/pdf/2009.10965.pdf.

\bibitem{ChenDISC:21}
------, ``Optimal error-free multi-valued {Byzantine} agreement,'' in
  \emph{International Symposium on Distributed Computing (DISC)}, Oct. 2021.

\bibitem{ChenOciorCOOL:24}
------, ``{OciorCOOL: Faster Byzantine agreement and reliable broadcast},''
  Sep. 2024, available on ArXiv: https://arxiv.org/abs/2409.06008.

\bibitem{SS:96}
M.~Sipser and D.~Spielman, ``Expander codes,'' \emph{IEEE Trans. Inf. Theory},
  vol.~42, no.~6, pp. 1710--1722, Nov. 1996.

\bibitem{Patra:11}
A.~Patra, ``Error-free multi-valued broadcast and {Byzantine} agreement with
  optimal communication complexity,'' in \emph{International Conference on
  Principles of Distributed Systems (OPODIS)}, 2011, pp. 34--49.

\bibitem{LCabaISIT:21}
F.~Li and J.~Chen, ``Communication-efficient signature-free asynchronous
  {B}yzantine agreement,'' in \emph{Proc. {IEEE} Int. Symp. Inf. Theory
  {(ISIT)}}, Jul. 2021.

\bibitem{ChenOciorMVBA:24}
J.~Chen, ``{OciorMVBA: Near-optimal error-free asynchronous MVBA},'' Dec. 2024,
  available on ArXiv: https://arxiv.org/abs/2501.00214.

\end{thebibliography}


\end{document}